\documentclass[twoside,11pt,final]{entics} 

\usepackage{enticsmacro}
\usepackage{amsmath, amsfonts, amssymb}
\usepackage{mathtools, stmaryrd}
\usepackage{graphicx}
\usepackage[all]{xy}
\usepackage[english]{babel}
\usepackage{tikz}
\usepackage{tikz-cd}
\usepackage{pgfplots}
\usepackage{fancybox}
\usepackage{stmaryrd}
\usepackage{mathtools}
\usepackage{enumitem}

\usepackage{todonotes}

\def\twoheaddownarrow{\rlap{$\downarrow$}\raise-.5ex\hbox{$\downarrow$}}
\def\twoheaduparrow{\rlap{$\uparrow$}\raise.5ex\hbox{$\uparrow$}}

\newcommand{\join}{\ensuremath{\bigvee}} 

\newcommand{\upperSetOf}[1]{\ensuremath{\mathop{\uparrow} {#1}}} 

\newcommand{\po}{\mathbf{Po}\mathrm{}}    

\newcommand{\Nat}{{\mathbb N}}
\newcommand{\Real}{{\mathbb R}}

\def\conf{MFPS 2026} 	
\volume{NN}			
\def\lastname{Bennett and Farjudian} 
\begin{document}
\begin{frontmatter}
  \title{Robustness Analysis via Horofunction Compactification} 						
  \author{Harrison Bennett\thanksref{a}\thanksref{myemail}}	
   \author{Amin Farjudian\thanksref{a}\thanksref{coemail}}		
     \address[a]{School of Mathematics\\ University of Birmingham\\				
    Birmingham, United Kingdom}  							
   \thanks[myemail]{Email: \href{mailto:hxb496@student.bham.ac.uk} {\texttt{\normalshape
        hxb496@student.bham.ac.uk}}} 
  \thanks[coemail]{Email:  \href{mailto:A.Farjudian@bham.ac.uk} {\texttt{\normalshape        A.Farjudian@bham.ac.uk}}}
\begin{abstract} 
Robustness analysis plays a central role in the verification and design of computational and hybrid systems, particularly when system behaviour depends continuously on parameters subject to perturbation. Existing domain-theoretic frameworks provide a principled foundation for reasoning about such perturbations via monotone maps on lattices of closed sets. However, these frameworks face significant limitations when the underlying state space is not locally compact, as is the case for the infinite-dimensional spaces that arise in analysis, machine learning, and control theory (e.g., $\ell_p$ and $L_p$ spaces). In these settings, the lattice of closed subsets fails to be continuous, and classical compactifications either sacrifice precision or lack computable structure.

We propose Gromov's horofunction compactification as a new tool for robustness analysis over a class of separable metric spaces of practical importance, including separable reflexive Banach spaces. Given a metric space $\mathbb{S}$, we show that its horofunction extension yields a compact metric space together with a Lipschitz embedding, which enables robust approximations of monotone maps via Scott-continuous maps on the compactified domain. For separable spaces, the horofunction compactification is metrizable, which provides a path toward effective domain-theoretic constructions.
\end{abstract}
\begin{keyword}
  Compactifications, Horofunctions, Robustness, Domain Theory
\end{keyword}
\end{frontmatter}

\section{Introduction}\label{intro}

We introduce a framework for robustness analysis over separable metric spaces. A system is said to be robust with respect to perturbations of a set of parameters if small changes to those parameters do not lead to significant changes in the behaviour of the system. Robustness analysis is a core topic in system analysis, including machine learning and cyber-physical systems.

We are particularly interested in spaces that are not locally compact. Examples of such spaces include sequence spaces $\ell_p$ and Lebesgue spaces $L_p(\mathbb{R}^n)$, for $p \in [1, \infty)$, and $n \geq 1$. These spaces are fundamental in functional analysis and related areas such as partial differential equations. The key step in our method is the metric compactification of a given state space via the so-called horofunctions~\cite{Gromov:Horofunctions:1981}.

We work within the topological framework established by Moggi et al.~\cite{moggi2018safe}. Assume that $\mathbb{S} := (S, \to)$ is a transition system, $P( S)$ denotes the powerset of $S$,  and $\rho_\mathbb{S}: P( S) \to  P( S)$ is a reachability map that returns, for any given set $ A \subseteq S$, the set of states reachable from $A$. Various types of reachability may be considered, e.g., reachability in finitely many transitions, reachability over the entire operation of the system, etc. As such, reachability maps are functions that are applied on \emph{subsets} of the state space rather than points of the space. 

Reachability maps are monotone, in the sense that $A \subseteq B \implies \rho_{\mathbb{S}}(A) \subseteq \rho_{\mathbb{S}}(B)$. Since we approximate subsets with outer approximations, we first consider working with the lattice $( P(S), \supseteq)$, i.e., with superset relation as the order of information because smaller sets are more accurate.

Although the lattice $( P(S), \supseteq)$ is continuous, it is $\omega$-continuous (i.e., has a countable basis) if and only if $S$ is (at most) countable. We are interested in state spaces such as Banach spaces, which are uncountable. As such, $P(S)$ is too large. However, in reality, physical measurements have finite precision, and mathematical models have finite fidelity. In the presence of such imprecisions, it is (in general) impossible to distinguish a subset $A$ of a metric space from its closure $\overline{A}$~\cite[Section~2]{moggi2019system}. Hence, we consider the smaller lattice $(\mathbb{C}(\mathbb{S}), \supseteq)$ of closed subsets of the state space. 

Moggi et al. introduced the \emph{robust topology} on the lattice $(\mathbb{C}(\mathbb{S}), \supseteq)$ of closed subsets of $S$~\cite[Definition~A.1]{moggi2018safe}, which captures robustness, in the sense that, a function $f: ({\mathbb{C}}(\mathbb{S}_1)), \supseteq) \to (\mathbb{C}(\mathbb{S}_2), \supseteq)$ is robust with respect to small perturbations of its input if and only if it is continuous with respect to the robust topologies on $(\mathbb{C}(\mathbb{S}_1), \supseteq)$ and $(\mathbb{C}(\mathbb{S}_2) , \supseteq)$~\cite[Theorem~A.2]{moggi2018safe}. In general, on $(\mathbb{C}(\mathbb{S}), \supseteq)$, the robust topology is finer than the Scott topology. When $\mathbb{S}$ is a compact metric space, however, the two topologies coincide~\cite[Theorem~A.4]{moggi2018safe} and $(\mathbb{C}(\mathbb{S}), \supseteq)$ is an $\omega$-continuous lattice. This is useful because, over effectively given domains, Scott continuity is a necessary condition for computability~\cite{Smyth:effectively_given_domains:1977} and if $(\mathbb{C}(\mathbb{S}), \supseteq)$ is $\omega$-continuous, then one has the basic ingredients for computing robust approximations of monotone maps, e.g., a robust approximation of the reachability map $\rho_{\mathbb{S}}$ for a transition system $\mathbb{S} := (S, \to)$.

If $\mathbb{S}$ is not compact, then the robust topology may be strictly finer than the Scott topology~\cite[Example~A.5]{moggi2018safe}. If, in addition, $\mathbb{S}$ is not locally compact, then $(\mathbb{C}(\mathbb{S}), \supseteq)$ is not a continuous lattice. In such cases, one may look for a substitute $\omega$-continuous lattice $\mathbb{L} := (L, \sqsubseteq)$ which is related to $(\mathbb{C}(\mathbb{S}), \supseteq)$ via an adjunction:
\begin{equation}
\label{eq:basic_Galois}
\begin{tikzcd}[column sep = large]
    \mathbb{C}(\mathbb{S}) \arrow[r, "\top"', "\iota_*" , yshift =
    1.1ex] & \mathbb{L}  \arrow[l, dashed, "\iota^*" , yshift = -1.1ex]
\end{tikzcd}     
\end{equation}
in the category of complete lattices and monotonic maps. In~\cite{farjudian2023robustness}, a construction is proposed which, for a broad class of metric spaces $\mathbb{S}$, generates a compact Hausdorff space $\hat{\mathbb{S}}$ for which $\mathbb{L} = \mathbb{C}(\hat{\mathbb{S}})$ is an $\omega$-continuous lattice.

The construction of~\cite{farjudian2023robustness} provides an explicit description of the compact space $\hat{\mathbb{S}}$ and a countable basis for the $\omega$-continuous lattice $\mathbb{C}(\hat{\mathbb{S}})$. When $\mathbb{S}$ is a subset of a Euclidean space, the space $\hat{\mathbb{S}}$ is a compactification of $\mathbb{S}$. When $\mathbb{S}$ is infinite-dimensional, however, the space $\hat{\mathbb{S}}$ is not a compactification of $\mathbb{S}$. For instance, if $\mathbb{S}$ is the closed unit ball of $\ell_p$ for $p \in [1, \infty]$, the space $\hat{\mathbb{S}}$ is the closed unit ball of $\ell_p$ with the weak-* topology~\cite[Theorem~5.12]{farjudian2023robustness}. Since the weak-* topology is coarser than the norm topology, this can lead to a major loss of precision, as detailed in~\cite[Section~5.2]{farjudian2023robustness}.

\subsection{Contributions} 
In this paper, we present a method that provides a metric compactification $\hat{\mathbb{S}} = (\hat{S}, d_{\hat{\mathbb{S}}})$ for a class of separable metric spaces $\mathbb{S} = (S, d_{\mathbb{S}})$ of practical importance. We use the \emph{horofunction compactification} introduced by Gromov~\cite{Gromov:Horofunctions:1981}. A key feature of this compactification is that the embedding $h: \mathbb{S} \to \hat{\mathbb{S}}$ is Lipschitz continuous (Proposition~\ref{prop_embedding_lipschitz}). As a result, the right adjoint $\iota_*: \mathbb{C}(\mathbb{S}) \to \mathbb{C}(\hat{\mathbb{S}})$ from the adjunction in~\eqref{eq:basic_Galois} is a robust map (Theorem~\ref{thm:right_adj_robust}).

For the special case of reflexive $\ell_p$ spaces (i.e., $p \in (1, \infty)$), we present an explicit description of a countable basis for the $\omega$-continuous lattice $(\mathbb{C}(\hat{\mathbb{S}}), \supseteq)$ of closed subsets  of the horofunction compactification, which is a key ingredient in developing a computational framework via effectively given domains~\cite{Smyth:effectively_given_domains:1977}. We also investigate loss of precision in our framework via an example of a non-convex closed subset of $\ell_p$.

\subsection{Related Work}

As mentioned earlier, we work within the framework established by Moggi et al.~\cite{moggi2018safe}, which is based on various concepts from topology, domain theory, and category theory. In~\cite{moggi2018safe}, it was established that, for any metric space $\mathbb{S} = ( S, d)$, the robust topology on $\mathbb{C}(\mathbb{S})$ includes the Scott topology, and the two coincide when $\mathbb{S}$ is compact. Further work, including some preliminary results involving non-compact spaces, was presented in~\cite{moggi2019system,Moggi_Farjudian_Taha:System_Analysis_and_Robustness:ICTCS:2019}. A general construction yielding an $\omega$-continuous lattice $\mathbb{L}$ as in the adjunction~\eqref{eq:basic_Galois} was introduced in~\cite{farjudian2023robustness}. The framework was further extended to  generalised (quantale-valued) metric spaces in~\cite{Dagnino_Farjudian_Moggi:Robustness_Quantales_Hausdorff_Smyth:ICTAC:2023,Dagnino_Farjudian_Moggi:Robust_Topology:arXiv:2025}. The framework of~\cite{moggi2018safe} was applied in introducing a domain-theoretic framework for robustness analysis of neural networks in~\cite{Zhou_Shaikh_Li_Farjudian:Robust_NN:MSCS:2023}.

The idea of using substitute domain-theoretic constructions for dealing with non-locally-compact spaces has also been applied in solutions of ordinary differential equations via abstract bases in~\cite{Edalat_Farjudian_Li:Temporal_Discretization:2023}. It was shown in~\cite{Farjudian_Jung:Spectral:ENTICS:2024} that the same construction can be obtained via the so-called spectral compactification.

There are well-known methods of compactification in the literature, such as the one-point (Alexandroff) compactification which is suitable only for locally-compact spaces, and Stone-{\v C}ech compactification which is suitable for Tychonoff (in particular, metric) spaces~\cite{munkres2017topology}, but its existence requires the axiom of choice, and as such, it is not suitable for a constructive framework. 

The Stone-{\v C}ech compactification can be very large and complex, but there are ways of obtaining smaller and constructive compactifications for separable metric spaces. One such method, which we call the \emph{horofunction compactification}, was proposed by Gromov~\cite{Gromov:Horofunctions:1981}. Our work relies on the results of~\cite{gutierrez2019metric,daniilidis2025horofunction}.

\subsection{Structure of the Paper}

The remainder of the paper is organised as follows:

\begin{itemize}
    \item Section~\ref{sec:prelim} recalls the mathematical background required for our development. We review key notions from domain theory, including continuous dcpos and Scott topology, and summarise the robust topology of Moggi et al.~\cite{moggi2018safe}, which serves as the conceptual foundation for robustness analysis over metric spaces. We also recall relevant topological preliminaries, such as compactifications and pointwise convergence.

    \item Section~\ref{sec:Adjunctions} develops the adjunction-based framework that underlies our approach. We show how any compactification $\hat{\mathbb{S}}$ of a metric space $\mathbb{S}$ yields an adjunction between the lattices $\mathbb{C}( \mathbb{S})$ and $\mathbb{C}(\hat{\mathbb{S}})$, and we analyse the topological behaviour of the induced adjoint maps. A central result of this section establishes that whenever the embedding $\iota: \mathbb{S} \to \hat{\mathbb{S}}$ is Lipschitz, one can construct robust approximations in a systematic way (Corollary~\ref{cor:compos_approx}).

    \item Section~\ref{sec:horofunction_compactification} introduces Gromov’s horofunction compactification and develops the corresponding horofunction extension $h: \mathbb{S} \to \overline{\mathbb{S}}^h$. We provide a self-contained treatment of the relevant theory and prove that the resulting compactification is metrizable and admits a $2$-Lipschitz embedding. This establishes the suitability of the horofunction compactification as a robust-preserving construction.

    \item   Section~\ref{sec:case_ell_p} specialises the framework to the case $\mathbb{S} = \ell_p$ with $1 < p < \infty$. We give an explicit construction of a countable basis for the $\omega$-continuous lattice $\mathbb{C}(\overline{\ell_p}^h)$, which is needed for an effective domain-theoretic framework~\cite{Smyth:effectively_given_domains:1977}. We also show, via Example~\ref{example:ell_p_precision}, that the horofunction approach can retain more information than the previously proposed construction of~\cite{farjudian2023robustness}.

    \item   Section~\ref{sec:concluding_remarks} concludes with a discussion of limitations and future directions, including effective representations for $\mathbb{C}(\overline{\mathbb{S}}^h)$ and the extension of our techniques to $L_p(X)$ spaces.    
\end{itemize}

\section{Mathematical Preliminaries}
\label{sec:prelim}

In this article, we write $A \subseteq_f B$ to denote that $A$ is a finite subset of $B$.

\subsection{Domain Theory}

We recall some basic concepts from domain theory~\cite{AbramskyJung94-DT,Goubault-Larrecq:Non_Hausdorff_topology:2013}. Assume that $\mathbb{P} := ( P, \sqsubseteq)$ is a partially ordered set (poset). A subset $A \subseteq P$ is said to be directed if it is non-empty and every finite subset has an upper bound in $A$, i.e.:
\begin{equation*}
    A \neq \emptyset \text{ and } \forall x, y \in A, \exists z \in A: \ (x \sqsubseteq z) \wedge (y \sqsubseteq z).
\end{equation*}
We write $A \subseteq_{\mathrm{dir}} P$ to indicate that $A$ is a directed subset of $P$.

A directed-complete partially ordered set (dcpo) is a poset $\mathbb{P}$ that is closed under joins of directed subsets. Intuitively, each element of a directed set $A \subseteq P$ provides some partial information about the join $\join{A}$. As such, dcpos provide a primitive structure for a computation framework. The full structure is provided by \emph{continuous} dcpos, also known as (continuous) domains.

Assume that $\mathbb{D} = ( D, \sqsubseteq)$ is a dcpo. We say that an element $x \in D$ is way-below $y \in D$ (written $x \ll y$) if $\forall A \subseteq_{\mathrm{dir}} D: (y \sqsubseteq \join A \implies \exists a \in A: x \sqsubseteq a)$. For each $x \in D$, we define \textcolor{black}{$\twoheaddownarrow x := \{ z \in D \mid z \ll x\}$.} The way-below relation is also known as the \emph{approximation} relation, and in domain-theoretic terms, the approximants of $x$ are the elements of $\twoheaddownarrow x$. A subset $B \subseteq D$ is said to be a \emph{basis} for $\mathbb{D}$ if every element $x \in D$ is the join of its approximants in the set $B$, i.e., if we let $B_x := B \cap \twoheaddownarrow x$, then for every $x \in D$, the set $B_x$ is directed and $x = \join B_x$. A dcpo which has a basis is said to be \emph{continuous}. In this article, by a domain we mean a continuous dcpo with a bottom element $\bot_D$.

For any subset $A \subseteq P$ of a poset $( P, \sqsubseteq)$, we define $\upperSetOf{A} := \{ x \in P \mid \exists a \in A: a \sqsubseteq x \}$. A subset $O \subseteq D$ of a dcpo is said to be Scott open if:
\begin{enumerate}
    \item $O$ is an upper set, i.e., $O = \upperSetOf{O}$.
    \item $O$ is inaccessible by directed joins, i.e., $\forall A \subseteq_{\mathrm{dir}} D: \join A \in O \implies A \cap O \neq \emptyset$.
\end{enumerate}
The collection of such subsets forms a topology on $D$ called the Scott topology. It turns out that the structure preserving maps on dcpos are exactly those that are continuous with respect to the Scott topology, i.e., for any dcpos $(D_1, \sqsubseteq_1)$ and  $(D_2, \sqsubseteq_2)$ and any function $f: D_1 \to D_2$, $f$ is Scott-continuous if and only if it preserves joins of directed sets~\cite[Proposition~2.3.4]{AbramskyJung94-DT}.

\subsection{Robustness}

We recall the definition of a robust map from~\cite[Deﬁnition~4.1]{moggi2018safe}. Given a metric space $(S,d)$ and a subset $A \subseteq S$, let $A_\delta$ denote the closure of the open set $B(A,\delta) := \{y \in S \mid \exists x \in A : d(x,y)<\delta\}$.
For any two metric spaces $\mathbb{S}_1$ and $\mathbb{S}_2$, a monotonic map $f:(\mathbb{C}(\mathbb{S}_1), \supseteq) \to (\mathbb{C}(\mathbb{S}_2), \supseteq)$ is said to be \emph{robust} at $C \in \mathbb{C}(\mathbb{S}_1)$ if:
\begin{equation}
\label{eq:robustness_epsilon_delta}
\forall \epsilon > 0, \exists \delta > 0: \ f(C_\delta)\subseteq f(C)_\epsilon   
\end{equation}
where $(\mathbb{C}(\mathbb{S}), \supseteq)$ denotes the lattice of closed subsets of $\mathbb{S}$ ordered with superset relation. The map $f$ is said to be robust if it is robust at every $C \in \mathbb{C}(\mathbb{S}_1)$.

Given a metric space $S$, a subset $U \subseteq \mathbb{C}(\mathbb{S})$ is said to be \emph{robust open} if
\begin{equation*}
    \forall C \in U, \exists \delta>0: \ \upperSetOf{B(C,\delta)} \subseteq U.
\end{equation*}
The collection of all such sets forms the so-called \emph{robust topology} on $\mathbb{C}(\mathbb{S})$. This topology indeed captures robustness in the sense that:

\begin{theorem}[{\cite[Theorem~A.2]{moggi2018safe}}]
\label{theorem_monotonic_robust}
Given a map $f:\mathbb{C}(\mathbb{S}_1)\to \mathbb{C}(\mathbb{S}_2),$ for metric spaces $\mathbb{S}_1$ and $\mathbb{S}_2$, the following properties are equivalent:
\begin{enumerate}
\item $f$ is a monotonic and robust map.
\item $f$ is continuous with respect to the robust topologies on $\mathbb{C}(\mathbb{S}_1)$ and $\mathbb{C}(\mathbb{S}_2)$.
\end{enumerate}
\end{theorem}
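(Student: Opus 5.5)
We prove the two implications separately, working directly from the definitions of robustness in~\eqref{eq:robustness_epsilon_delta} and of robust open sets. Two elementary facts do most of the work. First, $B(C,\delta) \subseteq C_\delta \subseteq B(C,\delta')$ whenever $\delta < \delta'$, since the closure of $B(C,\delta)$ consists of points at distance at most $\delta$ from $C$. Second, $\upperSetOf{X}$ in $(\mathbb{C}(\mathbb{S}), \supseteq)$ is the set of closed subsets of $X$, so $D \in \upperSetOf{B(C,\delta)}$ simply means $D \subseteq B(C,\delta)$.

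For (1) $\Rightarrow$ (2), let $f$ be monotone and robust, and let $V \subseteq \mathbb{C}(\mathbb{S}_2)$ be robust open. Take $C \in f^{-1}(V)$. Since $f(C) \in V$, there is $\epsilon > 0$ with $\upperSetOf{B(f(C),\epsilon)} \subseteq V$. Apply robustness at $C$ with $\epsilon/2$ to get $\delta > 0$ such that $f(C_\delta) \subseteq f(C)_{\epsilon/2} \subseteq B(f(C),\epsilon)$. Now let $D$ be closed with $D \subseteq B(C,\delta)$. Then $D \subseteq C_\delta$, so monotonicity gives $f(D) \subseteq f(C_\delta) \subseteq B(f(C),\epsilon)$. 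Hence $f(D) \in V$, which shows $\upperSetOf{B(C,\delta)} \subseteq f^{-1}(V)$. So $f^{-1}(V)$ is robust open.

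For (2) $\Rightarrow$ (1), we first show monotonicity. Every robust open set $U$ is an upper set: if $C \in U$ and $D \subseteq C$ is closed, then $D \subseteq B(C,\delta)$ for every $\delta > 0$, so $D \in U$. Hence the specialization order of the robust topology is contained in $\supseteq$. The converse inclusion is the main obstacle. We must show that if $D \not\subseteq C$, then some robust open set contains $C$ but not $D$. Pick $x \in D \setminus C$. The set $U = \{E \in \mathbb{C}(\mathbb{S}) \mid x \notin E\}$ is robust open. Indeed, if $x \notin E$ with $E$ closed, then $r = d(x,E) > 0$, and any closed $E' \subseteq B(E,r)$ misses $x$. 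So the specialization order is exactly $\supseteq$. Since continuous maps preserve the specialization order, $f$ is monotone.

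Robustness at $C$ then follows. Given $\epsilon > 0$, the set $V = \upperSetOf{B(f(C),\epsilon)}$ is robust open: if $E \subseteq B(f(C),\epsilon)$ is closed, then the same inclusion holds for anything inside $B(E,\eta)$ for a suitable $\eta$. If $E$ is only contained in the open neighbourhood without a uniform margin, this may fail, so we instead take the robust interior of $V$. That interior is robust open and contains $f(C)$, because $B(f(C),\epsilon/2)$ already witnesses the condition with margin $\epsilon/2$, by the triangle inequality. By continuity there is $\delta$ with $\upperSetOf{B(C,2\delta)} \subseteq f^{-1}(\mathrm{int}\, V)$. Since $C_\delta \subseteq B(C,2\delta)$, we get $f(C_\delta) \subseteq B(f(C),\epsilon) \subseteq f(C)_\epsilon$, which is robustness at $C$.
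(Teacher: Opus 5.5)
Your proof is correct. The paper gives no argument of its own here; it imports the statement from \cite[Theorem~A.2]{moggi2018safe}. Your text is therefore a self-contained replacement for a citation, and it isolates the two facts that make the equivalence work: the specialization order of the robust topology is exactly $\supseteq$, and $\upperSetOf{B(D,\epsilon)}$ is always a robust \emph{neighbourhood} of $D$ even though it need not be robust open.

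Two spots need tidying.

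First, showing that robust open sets are upper sets proves that $\supseteq$ is contained in the specialization order, not the reverse. It is your separating set $\{E \mid x \notin E\}$ that gives the reverse inclusion. Both inclusions are proved, so the conclusion stands, but the labels are swapped.

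Second, delete the aborted claim that $V = \upperSetOf{B(f(C),\epsilon)}$ is robust open, because it is false in general. Take $\mathbb{S}_2 = \ell_2$ and $f(C) = \{\mathbf{0}\}$. The closed, uniformly discrete set $E = \{(1-\frac{1}{n})\epsilon e_n \mid n \geq 2\}$ lies in $V$, yet $\{\epsilon e_n\} \in \upperSetOf{B(E,\delta)} \setminus V$ as soon as $\epsilon/n < \delta$. Then replace the sentence about $B(f(C),\epsilon/2)$ ``witnessing the condition'' with an explicit open set. Let $W := \{E \in \mathbb{C}(\mathbb{S}_2) \mid \exists \eta > 0: E \subseteq B(f(C),\epsilon - \eta)\}$. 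It contains $f(C)$ (take $\eta = \epsilon/2$) and is contained in $V$. It is robust open: if $E \subseteq B(f(C),\epsilon-\eta)$ and $E' \subseteq B(E,\eta/2)$, the triangle inequality gives $E' \subseteq B(f(C),\epsilon - \eta/2)$. This is exactly what $f(C) \in \mathrm{int}\,V$ requires, and the rest of your argument goes through unchanged.
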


For any metric space $\mathbb{S} := (S, d)$, the Scott topology on $\mathbb{C}(\mathbb{S})$ is included in the robust topology~\cite[Lemma~A.3]{moggi2018safe}. When $\mathbb{S}$ is also compact, the Scott and robust topologies on $\mathbb{C}(\mathbb{S})$ coincide~\cite[Theorem~A.4]{moggi2018safe}. As a result, when $( S_1, d_1)$ and $( S_2, d_2)$ are two compact metric spaces, a map $f: \mathbb{C}(\mathbb{S}_1) \to \mathbb{C}(\mathbb{S}_2)$ is robust if and only if it is Scott-continuous. We also point out that robustness becomes trivial on discrete spaces, that is, if $( S_1, d_1)$ is discrete and $( S_2, d_2)$ is an arbitrary metric space, then every monotonic map $f: \mathbb{C}(\mathbb{S}_1) \to \mathbb{C}(\mathbb{S}_2)$ is robust~\cite[Proposition~1]{moggi2019system}.

\subsection{Topology}

By a compactification of a topological space $\mathbb{X} := (X, \tau)$ we mean a compact space $\hat{\mathbb{X}} := ( \hat{X}, \hat{\tau})$ together with a dense embedding $\iota : \mathbb{X} \hookrightarrow \hat{\mathbb{X}}$, i.e., $\overline{\iota(X)}^{\hat{\mathbb{X}}} = \hat{X}$, in which $\overline{\iota(X)}^{\hat{\mathbb{X}}}$ denotes the topological closure of $\iota(X)$ in $\hat{\mathbb{X}}$.

\begin{definition}
\label{def_pointwise}
    Let $X$ be a set and $\mathbb{Y} = (Y,\tau)$ be a topological space. For a point $x \in X$ and an open set $U \subseteq Y$, let 
    \begin{equation*}
      S(x,U) := \{f: X \to Y \mid f(x)\in U\}.  
    \end{equation*}
    The collection of all sets $S(x,U)$ is a subbasis for a topology on $\mathbb{Y}^X$ which is called the \emph{topology of pointwise convergence}.
\end{definition}

Note that the topology of pointwise convergence coincides with the product topology. We also recall the definition of equicontinuity from~\cite[Section~45]{munkres2017topology}. Let $\mathbb{X} := (X,d)$ and $ \mathbb{Y} := (Y, d')$ be metric spaces, and let $\mathcal{F}$ be a subset of the set $C(\mathbb{X}; 
\mathbb{Y})$ of continuous functions from $\mathbb{X}$ to $\mathbb{Y}$. If $x_0 \in X$, the set $\mathcal{F}$ of functions is said to be \emph{equicontinuous at $x_0$}
if, given $\varepsilon > 0$, there exists a neighbourhood $U_\epsilon$ of $x_0$ such that:
\begin{equation*}
\forall f \in \mathcal{F}, \forall x \in U_\epsilon: \ d'\bigl(f(x), f(x_0)\bigr) < \varepsilon.    
\end{equation*}
If the set $\mathcal{F}$ is equicontinuous at $x_0$ for each $x_0 \in X$, it is said 
to be \emph{equicontinuous}.

\section{Adjunctions and a Robust-Scott Correspondence}
\label{sec:Adjunctions}

In this article, $\po$ denotes the category of complete lattices and monotonic maps. An \emph{adjunction} in $\po$ is a pair of morphisms $f:X \rightarrow Y$ and $g:Y \rightarrow X$ such that $f \circ g \le \text{id}_Y$ and $ \text{id}_X\le g \circ f$. It is written as $f \dashv g$ and the maps $f$ and $g$ are called the left and right adjoints, respectively. For adjunctions in  \textbf{Po}, we have the following characterisation which is also called a \emph{Galois connection}: $f \dashv g$ if and only if:
\begin{equation*}
 \forall x \in X, \forall y \in Y:\ x \leq_X g(y) \iff f(x) \leq_Y y.
\end{equation*}

Let $\mathbb{S} := (S, d)$ be a metric space and let $\hat{\mathbb{S}} := ( \hat{S}, \hat{d})$ be a compactification of it with embedding $\iota: S \to \hat{S}$. We define an adjunction $\iota^* \dashv \iota_*$ between the lattices of closed sets $\mathbb{C}(\mathbb{S})$ and $\mathbb{C}(\hat{\mathbb{S}})$. The aim is to construct the maps $\iota^*$ and $\iota_*$ so that given a map $f: \mathbb{C}(\mathbb{S}) \rightarrow \Sigma$, with $\Sigma$ being the two point (Sierpi{\'n}ski) lattice, and a Scott-continuous approximation $\hat{f}$ of $f \circ \iota^*$, the map $\hat{f} \circ \iota_*$ will be a robust approximation of $f$ (Figure~\ref{fig:compactification_diagram}). Note that $\Sigma$ is also of the form $\mathbb{C}(\mathbb{T})$, in which $\mathbb{T} := (\{ * \}, d_\mathbb{T})$ is the compact metric space of a singleton with the discrete metric $d_\mathbb{T}$.
\begin{figure}[h] 
    \centering
    \Large 
    \begin{tikzcd}
        {\mathbb{C}(\mathbb{S})} & \top & {\mathbb{C}(\hat{\mathbb{S}})} \\
        & \Sigma \cong \mathbb{C}(\{ * \})
        \arrow["{\iota_*}", shift left=3, from=1-1, to=1-3]
        \arrow["f"', from=1-1, to=2-2]
        \arrow["{  \iota^*}", shift left=2, from=1-3, to=1-1]
        \arrow["{\hat{f}}", from=1-3, to=2-2]
    \end{tikzcd}
    \caption{A Diagram showing the relationship between the adjunctions $\iota_*$, $\iota^*$, and the maps $f$, $\hat{f}$.}
    \label{fig:compactification_diagram}
\end{figure}
\begin{remark}
We have considered maps into the Sierpi{\'n}ski space $\Sigma = \mathbb{C}(\mathbb{\{ * \}})$ rather than into a more general space $\mathbb{C}(\mathbb{K})$, for some compact metric space $\mathbb{K}$, partly for simplicity, and partly because the so-called \emph{safety analyses} are indeed maps into the Sierpi{\'n}ski space~\cite[Section~1]{farjudian2023robustness}. In this respect, we work within the same setting as~\cite{farjudian2023robustness}.
\end{remark}

We define the candidate right and left adjoints for the adjunction as follows:
 \begin{align*}
\forall A \in \mathbb{C}(\mathbb{S}): \quad      &\iota_* (A) := \overline{\iota(A)}^{\hat{\mathbb{S}}},\\
          \forall B \in \mathbb{C}(\hat{\mathbb{S}}): \quad &\iota^* (B) :=\iota^{-1}(B).
 \end{align*}
Note that $\iota^*(B)= \{x\in S \mid  \iota(x)\in B \} = \iota^{-1}(B \cap \iota(S) )$. It is straightforward to verify that $\iota_*$ and $\iota^*$ are well-defined monotonic maps between $\mathbb{C}(\mathbb{S})$ and $\mathbb{C}(\hat{\mathbb{S}})$.

\begin{proposition}
    The maps $\iota_*$ and $\iota^*$ form an adjunction.
\end{proposition}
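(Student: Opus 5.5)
The plan is to verify the Galois-connection characterisation recalled at the start of this section, keeping track of the fact that both lattices are ordered by reverse inclusion $\supseteq$. The adjunction is written $\iota^* \dashv \iota_*$, so $\iota^*: \mathbb{C}(\hat{\mathbb{S}}) \to \mathbb{C}(\mathbb{S})$ is the left adjoint and $\iota_*: \mathbb{C}(\mathbb{S}) \to \mathbb{C}(\hat{\mathbb{S}})$ is the right adjoint. The condition to prove is therefore that, for all $B \in \mathbb{C}(\hat{\mathbb{S}})$ and $A \in \mathbb{C}(\mathbb{S})$,
\begin{equation*}
B \supseteq \overline{\iota(A)}^{\hat{\mathbb{S}}} \iff \iota^{-1}(B) \supseteq A .
\end{equation*}

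First I will confirm that both maps land in the right lattices. For $\iota^*$ this uses that $\iota$ is a continuous embedding, so $\iota^{-1}(B)$ is closed in $\mathbb{S}$ whenever $B$ is closed in $\hat{\mathbb{S}}$. For $\iota_*$ it is immediate, since a closure is always closed. Monotonicity of both maps is clear.

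Then I will prove the two directions of the equivalence.
\begin{itemize}
\item[($\Rightarrow$)] If $B \supseteq \overline{\iota(A)}$, then $B \supseteq \iota(A)$. Hence every $x \in A$ satisfies $\iota(x) \in B$, that is, $A \subseteq \iota^{-1}(B)$.
\item[($\Leftarrow$)] If $A \subseteq \iota^{-1}(B)$, then $\iota(A) \subseteq B$. Since $B$ is closed in $\hat{\mathbb{S}}$ and the closure is the smallest closed superset, $\overline{\iota(A)} \subseteq B$.
\end{itemize}

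There is no real obstacle here. The only point needing care is the order reversal: one must check that the inclusions line up so that $\iota^*$ is indeed the left adjoint with respect to $\supseteq$, rather than the right adjoint with respect to $\subseteq$.

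As an optional cross-check, I would also verify the unit and counit inequalities directly. On one side, $\iota^{-1}(\overline{\iota(A)}) \supseteq A$, which gives $\mathrm{id} \le \iota^* \circ \iota_*$ in the $\supseteq$ order. On the other side, $\overline{\iota(\iota^{-1}(B))} \subseteq B$, because $B$ is closed; this gives $\iota_* \circ \iota^* \ge \mathrm{id}$ in the $\supseteq$ order.
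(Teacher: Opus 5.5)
Your proof is correct and follows the paper's argument: both check $B \supseteq \overline{\iota(A)}^{\hat{\mathbb{S}}} \iff \iota^{-1}(B) \supseteq A$ directly in the two directions, using only elementary image/preimage inclusions and the closedness of $B$. There is one slip in your optional cross-check, at exactly the order reversal you flagged: $\iota^{-1}(\overline{\iota(A)}) \supseteq A$ reads as $\iota^* \circ \iota_* \le \mathrm{id}_{\mathbb{C}(\mathbb{S})}$ in the $\supseteq$ order, not $\mathrm{id} \le \iota^* \circ \iota_*$; the former is the inequality required by the paper's definition of $\iota^* \dashv \iota_*$ and used in Corollary~\ref{cor:compos_approx}.
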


\begin{proof}
    Let $A \in \mathbb{C}(\mathbb{S})$ and $B \in \mathbb{C}(\hat{\mathbb{S}})$. To show that the maps $\iota_*$ and $i^*$ form an adjunction, we must show that the following Galois connection is satisfied:
    \begin{equation*}
      B \leq \iota_*(A) \iff \iota^*(B) \leq A,
    \end{equation*}
    where $\leq$ is the superset relation.  

\begin{description}
    \item[$(\impliedby)$] By assumption, we have: $\iota^*(B) \leq A \iff A \subseteq \iota^{-1}(B \cap \iota(S))$. This means that $\iota(A)$ is a subset of $B \cap \iota(S)$ which implies $\iota(A) \subseteq B.$ Since $B$ is closed in $\mathbb{\hat{S}}$, it follows that $\overline{\iota(A)}^{\mathbb{\hat{S}}} \subseteq B $ and hence $B \leq \iota_*(A)$.

    \item[$(\implies)$] By assumption, $\overline{\iota(A)}^{\mathbb{\hat{S}}} \subseteq B$, so $\iota(A) \subseteq B.$ Since $A \subseteq S$ it must also follow that $\iota(A) \subseteq B \cap \iota(S)$. By definition of the preimage this means $A \subseteq \iota^{-1}(\iota(A)) \subseteq \iota^{-1}(B \cap \iota(S))$  and hence $\iota^*(B) \leq A.$
\end{description} 
 \end{proof}

For any $A \in \mathbb{C}(\mathbb{S})$, we have $A \subseteq \iota^*(\iota_*(A))$. When we obtain equality, i.e., $A = \iota^*(\iota_*(A))$, we say that there is \emph{no loss of precision}. In cases of inequality, the difference between $A$ and $\iota^*(\iota_*(A))$ provides a measure of loss of precision. A detailed analysis of precision was presented in~\cite[Section~5]{farjudian2023robustness} for the construction introduced in~\cite{farjudian2023robustness}. In Example~\ref{example:ell_p_precision}, we will show that the horofunction construction can retain more precision when compared to the construction of~\cite{farjudian2023robustness}.

\subsection{Metric Compatibility}

In favourable cases, the embedding $\iota : \mathbb{S} \to \hat{\mathbb{S}}$ can be an isometry, e.g., when $\mathbb{S} := ( 0, 1)$ and $\hat{\mathbb{S}} := [0,1]$. In this section, we demonstrate that, for some spaces of interest (e.g., infinite-dimensional Banach spaces) no embedding associated with the compactification can be an isometry (Corollary~\ref{cor:iota_not_isometry}).

Recall that $\mathbb{S} := (S, d)$ is said to be \emph{bounded} if $\exists K \in \mathbb{R},  \forall x,y \in S: d(x,y) \leq K$. It is said to be \emph{totally bounded} if for all $\epsilon > 0$ there exists a finite collection of open balls of radius $\epsilon$  whose union contains $S$. Every compact metric space is totally bounded.

\begin{proposition}
\label{prop:totally_bdd_bdd}
    If $\mathbb{S} := ( S, d)$ is totally bounded, then it is bounded.
\end{proposition}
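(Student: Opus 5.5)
The plan is to use total boundedness at a single scale, say $\epsilon = 1$, and then bound every distance by a path through the finitely many centres using the triangle inequality. If $S = \emptyset$, boundedness holds vacuously (take any $K$), so we assume $S \neq \emptyset$.

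\emph{Step 1 (finite cover).} By total boundedness there are finitely many points $x_1, \dots, x_n \in S$, with $n \geq 1$ since $S$ is nonempty, such that $S \subseteq \bigcup_{i=1}^n B(x_i, 1)$. Here $B(x_i,1)$ denotes the open ball of radius $1$; if the paper intends balls with centres in $S$, this is automatic, so no subtlety arises.

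\emph{Step 2 (bounding the centres).} Set $M := \max_{1 \le i, j \le n} d(x_i, x_j)$. This is finite because it is a maximum over a finite set of real numbers.

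\emph{Step 3 (triangle inequality).} Let $x, y \in S$. Choose indices $i, j$ with $d(x, x_i) < 1$ and $d(y, x_j) < 1$. Then
\[
d(x,y) \le d(x,x_i) + d(x_i,x_j) + d(x_j,y) < 2 + M,
\]
so $K := M + 2$ witnesses boundedness.

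There is no genuine obstacle. The only points needing care are the empty-space case and the fact that Step 2 relies on the cover being finite, which is exactly where total boundedness, rather than mere coverability by small balls, is used.
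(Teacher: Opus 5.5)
Your proof is correct and is essentially the argument the paper defers to (Munkres, Example~1, p.~273): cover $S$ by finitely many balls of a fixed radius and bound $d(x,y)$ by twice the radius plus the maximum distance between centres. Your separate treatment of the empty space is a small extra piece of care that the cited argument leaves implicit.
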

\begin{proof}
The proof is straightforward, see, e.g.,\cite[Example~1, p.~273]{munkres2017topology}
\end{proof}

\begin{proposition}
\label{prop:subset_total}
  Subspaces of totally bounded sets are totally bounded.
\end{proposition}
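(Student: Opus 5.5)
The natural approach is the standard radius-halving argument. The subtlety is that the covering balls for the ambient space $S$ need not be centred at points of the subspace. Let $\mathbb{S} := (S, d)$ be totally bounded and let $T \subseteq S$ carry the restricted metric. We must show that for every $\epsilon > 0$, finitely many open balls of radius $\epsilon$ in $T$, with centres in $T$, cover $T$.

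Fix $\epsilon > 0$. By total boundedness of $\mathbb{S}$, choose finitely many open balls $B(x_1, \epsilon/2), \dots, B(x_n, \epsilon/2)$ with centres $x_i \in S$ whose union contains $S$, and hence contains $T$. Discard every ball that does not meet $T$, and relabel the remaining ones as $B(x_1,\epsilon/2),\dots,B(x_m,\epsilon/2)$. If $T = \emptyset$ the claim is vacuous. For each remaining $i$, pick a point $t_i \in T \cap B(x_i, \epsilon/2)$; only finitely many choices are needed, so no choice principle is involved.

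Next we show that the balls $B_T(t_i,\epsilon) := \{ y \in T \mid d(t_i,y) < \epsilon\}$, for $i = 1,\dots,m$, cover $T$. Let $y \in T$. Some ball $B(x_i,\epsilon/2)$ contains $y$, and it was not discarded because it meets $T$. By the triangle inequality,
\begin{equation*}
d(t_i, y) \le d(t_i, x_i) + d(x_i, y) < \epsilon/2 + \epsilon/2 = \epsilon,
\end{equation*}
so $y \in B_T(t_i,\epsilon)$. Hence $T$ is covered by finitely many $\epsilon$-balls centred in $T$. Since $\epsilon$ was arbitrary, $T$ is totally bounded.

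The only real obstacle is the recentring step. The naive restriction $B(x_i,\epsilon)\cap T$ is not an open ball of the subspace $T$, because its centre $x_i$ may lie outside $T$. Passing to radius $\epsilon/2$ in $S$ and moving each centre to a nearby point of $T$ resolves this at the cost of doubling the radius. Everything else is routine.
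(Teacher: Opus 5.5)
Your proof is correct and matches the paper's argument: take an $\epsilon/2$-cover of the ambient space, discard the balls that miss the subspace, recentre each remaining ball at a point of the subspace, and finish with the triangle inequality. Your remarks on the empty subspace and on needing only finitely many choices are harmless refinements.
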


\begin{proof}
Suppose that $\mathbb{S}$ is totally bounded with $A \subseteq S$. Fix $\epsilon >0$ and let $a_1, \ldots, a_n$ be the centres of an $\epsilon/2$ covering of $\mathbb{S}$. 
Without loss of generality, assume that $a_1, \ldots, a_k$ are the centres of those balls that intersect $A$, hence, $\forall i  \in \{1, \dots, k \}: b_{\epsilon /2}(a_i) \cap A \ne \emptyset.$ For each $i \in \{1, \dots, k\}$, choose some point $b_i \in b_{\epsilon /2}(a_i) \cap A$. We claim that $\{b_{\epsilon}(b_i) \}_{i = 1}^k$ is a covering of $A$. To show this, observe that:
   \begin{equation*}
       \forall b \in A, \exists j \in \{1, \dots, k \}: \quad b \in b_{\epsilon/2}(a_j).
   \end{equation*}
   By the triangle inequality, we have $d(b_j, b) \le d(b_j, a_j) + d(a_j, b)  < \epsilon.$
\end{proof}

\begin{proposition}
\label{prop:imb_totally}
   Let $\mathbb{S} = (S, d_{\mathbb{S}})$ be a metric space with compactification $\hat{\mathbb{S}} = (\hat{S}, d_{\hat{\mathbb{S}}})$ and associated embedding $\iota: \mathbb{S} \hookrightarrow \hat{\mathbb{S}}$. If $\iota$ is isometric, then $\mathbb{S}$ is totally bounded.
\end{proposition}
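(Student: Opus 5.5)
The plan is to show that $\iota(S)$ is a totally bounded subset of $\hat{\mathbb{S}}$, and then transfer total boundedness back to $\mathbb{S}$ using the isometry.

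\textbf{Step 1.} Since $\hat{\mathbb{S}}$ is a compact metric space, it is totally bounded. This follows from the fact recalled before Proposition~\ref{prop:totally_bdd_bdd}: for each $\epsilon>0$, the open balls of radius $\epsilon$ cover $\hat{S}$, so finitely many of them do.

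\textbf{Step 2.} By Proposition~\ref{prop:subset_total}, the subspace $\iota(S) \subseteq \hat{S}$ is totally bounded with respect to the restriction of $d_{\hat{\mathbb{S}}}$. It matters here that the proof of Proposition~\ref{prop:subset_total} produces centres $b_i$ lying inside the subset itself. Hence for each $\epsilon>0$ there are finitely many $x_1,\dots,x_k \in S$ with
\[
\iota(S) \subseteq \bigcup_{i=1}^k b_\epsilon(\iota(x_i)).
\]

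\textbf{Step 3.} Pull the cover back along $\iota$. Take any $x \in S$. Then there is some $i$ with $d_{\hat{\mathbb{S}}}(\iota(x),\iota(x_i))<\epsilon$. Since $\iota$ is isometric, $d_{\mathbb{S}}(x,x_i) = d_{\hat{\mathbb{S}}}(\iota(x),\iota(x_i)) < \epsilon$. So the balls $b_\epsilon(x_1),\dots,b_\epsilon(x_k)$ cover $S$. As $\epsilon$ was arbitrary, $\mathbb{S}$ is totally bounded.

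The only delicate point is Step 2. We need the centres of the finite cover to lie in $\iota(S)$, so that they are images of points of $S$ and the isometry can transport the cover back. This is exactly what Proposition~\ref{prop:subset_total} supplies, so I expect no real obstacle.

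Density of the embedding is not needed. The argument shows more generally that any metric space isometrically embedded in a compact metric space is totally bounded.
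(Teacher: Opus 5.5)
Your proof is correct and follows essentially the same route as the paper. Compactness makes $\hat{\mathbb{S}}$ totally bounded, Proposition~\ref{prop:subset_total} gives a finite $\epsilon$-cover of $\iota(S)$ with centres in $\iota(S)$, and the isometry pulls this back to a finite $\epsilon$-cover of $S$; your remark that density of $\iota$ is never used also holds for the paper's argument.
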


\begin{proof}
    Let $\iota: \mathbb{S} \hookrightarrow \hat{\mathbb{S}}$ be an isometric embedding. Since $\hat{\mathbb{S}}$ is compact (and therefore totally bounded), from Proposition~\ref{prop:subset_total}, the image of $S$ under $\iota$ is also totally bounded. This means
    \begin{equation*}
    \forall \epsilon > 0, \exists p_1, \ldots, p_n \in \iota(S): \       \iota(S) \subseteq \bigcup_{i =1}^{n} b^{\hat{\mathbb{S}}}_{\epsilon}(p_i).
    \end{equation*}
    For each $p_i,$ let $q_i = \iota^{-1}(p_i) \in S$. Using the fact that $\iota$ is an isometry and taking the inverse of both sides we have $S \subseteq \bigcup_{i =1}^{n} \iota^{-1}(b^{\hat{\mathbb{S}}}_{\epsilon}(p_i))$. By definition,
    \begin{align*}
    \iota^{-1}(b^{\hat{S}}_{\epsilon}(p_i)) &= \{x \in S \mid \iota(x) \in b^{\hat{\mathbb{S}}}_{\epsilon}(p_i) \} \\
    &= \{x \in S \mid d_{\hat{\mathbb{S}}}(\iota(x), \iota(q_i)) < \epsilon \} \\
   (\text{$\iota$ is an isometry}) &= \{x \in S \mid d_{\mathbb{S}}(x,q_i) < \epsilon \} \quad 
    \\ &= b_\epsilon^{\mathbb{S}}(q_i).
\end{align*}
It follows that $S \subseteq \bigcup_{i=1}^{n}b_\epsilon^\mathbb{S}(q_i)$. Hence, $\mathbb{S}$ is totally bounded.
\end{proof}

We have shown that for any compactification, if we require the embedding $\iota: \mathbb{S} \hookrightarrow \hat{\mathbb{S}}$ to be isometric, then $\mathbb{S}$  must be totally bounded and consequently bounded (Proposition~\ref{prop:totally_bdd_bdd}).
Also, by~\cite[Theorem~45.1]{munkres2017topology}:
\begin{equation}
\label{eq:compact_tot_bdd_complete}
 \text{} \mathbb{S} \text{ is compact} \iff \mathbb{S} \text{ is totally bounded + complete.}   
\end{equation}

\begin{corollary}
\label{cor:iota_not_isometry}
    If $\mathbb{S}$ is a non-compact and complete metric space, then the embedding $\iota: \mathbb{S} \hookrightarrow  \hat{\mathbb{S}}$ cannot be an isometry.
\end{corollary}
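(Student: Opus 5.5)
The plan is to argue by contradiction, chaining Proposition~\ref{prop:imb_totally} with the characterisation~\eqref{eq:compact_tot_bdd_complete}. Suppose $\mathbb{S}$ is complete and non-compact, and that for some compactification $\hat{\mathbb{S}} = (\hat{S}, d_{\hat{\mathbb{S}}})$ the associated embedding $\iota: \mathbb{S} \hookrightarrow \hat{\mathbb{S}}$ is an isometry.

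\emph{Step 1.} Proposition~\ref{prop:imb_totally} applies verbatim to $\iota$, since it only uses that $\hat{\mathbb{S}}$ is a compact metric space and that $\iota$ is isometric. It yields that $\mathbb{S}$ is totally bounded.

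\emph{Step 2.} Completeness of $\mathbb{S}$ is a hypothesis. By~\eqref{eq:compact_tot_bdd_complete}, i.e.\ \cite[Theorem~45.1]{munkres2017topology}, a totally bounded complete metric space is compact. Hence $\mathbb{S}$ is compact, contradicting the assumption that it is non-compact. So no such isometric embedding can exist.

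There is no real obstacle here; the corollary is a direct combination of two earlier results. The only point that needs a word of care is what ``isometry'' means in Proposition~\ref{prop:imb_totally}. That proof silently identifies $q_i = \iota^{-1}(p_i)$, which uses injectivity of $\iota$, and this is automatic for a distance-preserving map. It also uses that the ball $b^{\mathbb{S}}_\epsilon(q_i)$ is the preimage of $b^{\hat{\mathbb{S}}}_\epsilon(p_i)$. Both follow from $d_{\hat{\mathbb{S}}}(\iota(x),\iota(y)) = d_{\mathbb{S}}(x,y)$ alone, without surjectivity, so the argument applies to the embedding $\iota$ as stated.

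As a sanity check on the hypotheses, I would note that completeness cannot be dropped. For $\mathbb{S} = (0,1)$ the inclusion into $[0,1]$ is an isometric compactification, and here $\mathbb{S}$ is totally bounded but not complete. This matches the example given at the start of the subsection.
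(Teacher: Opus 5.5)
Your proof is correct and follows the paper's own argument: Proposition~\ref{prop:imb_totally} gives total boundedness, and completeness together with \eqref{eq:compact_tot_bdd_complete} then forces compactness, which is a contradiction. Your remarks on injectivity of distance-preserving maps and on the isometric compactification of $(0,1)$ into $[0,1]$ are accurate, though not needed for the argument.
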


\begin{proof}
By Proposition~\ref{prop:imb_totally}, if $\iota$ is an isometry, then $\mathbb{S}$ must be totally bounded. If, furthermore, $\mathbb{S}$ is complete, then by~\eqref{eq:compact_tot_bdd_complete}, it must be compact, which is a contradiction.
\end{proof}

As a consequence, for the following spaces, the embedding $\iota$ into the compactification cannot be isometric:

\begin{itemize}
    \item Metric spaces that are not totally bounded such as the following (when equipped with the usual norm and metric topology):  $\mathbb{R}$, $\mathbb{R}^n$, $\ell_p$ spaces, and $L_p(X)$ ($1 \le p \leq \infty$), for a measurable space  $(X, \Sigma, \mu)$.
    \item Non-compact, complete metric spaces, including closed unit balls of infinite-dimensional Banach spaces such as $\ell_p$ and $L_p(X)$.
\end{itemize}

Nevertheless, it turns out that, for our purposes, Lipschitz continuity of the embedding 
$\iota$ is sufficient (Theorem~\ref{thm:right_adj_robust}). This is useful because the embedding associated with our horofunction compactification is indeed $2$-Lipschitz (Proposition~\ref{prop_embedding_lipschitz}).

\subsection{Adjoint Maps and Their Topological Properties }
\label{subsec:ajoint_top_prop}

The left adjoint $\iota^*: \mathbb{C}(\hat{\mathbb{S}}) \to \mathbb{C}(\mathbb{S})$ is known to be Scott-continuous~\cite[Proposition~3.1.14]{AbramskyJung94-DT}. The right adjoint $\iota_*: \mathbb{C}(\mathbb{S}) \to \mathbb{C}(\hat{\mathbb{S}})$, on the other hand, is Scott-continuous if and only if $\mathbb{C}(\mathbb{S})$ is a continuous lattice~\cite[Theorem~3.1.4.]{AbramskyJung94-DT}, which is, in turn, equivalent to $\mathbb{S}$ being locally compact~\cite[Theorem~5.2.9]{Goubault-Larrecq:Non_Hausdorff_topology:2013}.\footnote{To be precise, requiring $\mathbb{C}(\mathbb{S})$ to be a continuous lattice is equivalent to $\mathbb{S}$ being \emph{core-compact}. But since $S$ is assumed to be metrizable (hence, sober), it is equivalent to requiring local compactness.} We are interested in non-locally-compact spaces such as infinite-dimensional Banach spaces.

Although the left adjoint $\iota^*: \mathbb{C}(\hat{\mathbb{S}}) \to \mathbb{C}(\mathbb{S})$ is Scott-continuous, it may not be continuous with respect to the robust topology on $\mathbb{C}(\mathbb{S})$.

\begin{example}
\label{example:left_adjoint_not_robust}
    Assume that $\mathbb{S} = \mathbb{R}$ and $\hat{\mathbb{S}} = [ -1, 1]$ with the embedding $\iota = \tanh$ and the left adjoint $\iota^* : \mathbb{C}([ -1, 1]) \to \mathbb{C}(\mathbb{R})$. Then, the set $\{ \emptyset \} \subseteq \mathbb{C}(\mathbb{R})$ is a robust open subset of $\mathbb{C}(\mathbb{R})$, but
    \begin{equation*}
        (\iota^*)^{-1}(\{ \emptyset \}) = \{ \emptyset, \{ -1 \}, \{ 1 \}, \{ -1, 1\}\},
    \end{equation*}
 is not a Scott open subset of $\mathbb{C}([-1,1])$. For instance, for each $n \in \Nat$, define $A_n := [1 - 2^{-n}, 1] \in \mathbb{C}([ -1, 1])$. Then, $\join_{n \in \Nat} A_n = \{ 1 \} \in (\iota^*)^{-1}(\{ \emptyset \})$ but $\forall n \in \Nat: A_n \not \subseteq (\iota^*)^{-1}(\{ \emptyset \})$. Therefore, $\iota^* : \mathbb{C}([ -1, 1]) \to \mathbb{C}(\mathbb{R})$ is not continuous with respect to the robust topology on $\mathbb{C}(\mathbb{R})$ and the Scott topology on $\mathbb{C}([ -1, 1])$.
\end{example}

 In Theorem~\ref{thm:right_adj_robust}, we will show that, if $\hat{\mathbb{S}}$ is a metric space and $\iota : \mathbb{S} \to \hat{\mathbb{S}}$ is Lipschitz, then the right adjoint is robust. Note that the left adjoint still may not be robust. For instance, in Example~\ref{example:left_adjoint_not_robust}, the embedding ($\tanh$) is $1$-Lipschitz but the left adjoint is not continuous with respect to the robust topology on $\mathbb{C}([ -1, 1])$ either because $[-1,1]$ is a compact metric space and as a result, the robust and Scott topologies coincide on $\mathbb{C}([ -1, 1])$~\cite[Theorem~A.4]{moggi2018safe}.

Nevertheless, robust continuity of the right adjoint $\iota_*$ is sufficient for us to prove that the map $\hat{f} \circ \iota_*$ is a robust approximation of $f: \mathbb{C}(\mathbb{S}) \to \Sigma$ (Corollary~\ref{cor:compos_approx}). In Proposition~\ref{prop_embedding_lipschitz}, we will prove that the embedding associated with our horofunction compactification is $2$-Lipschitz.

\begin{lemma}
\label{lemma:right_adj_robust}
If the embedding $\iota: \mathbb{S} \hookrightarrow \hat{\mathbb{S}}$ is Lipschitz continuous, then the induced right adjoint
\begin{align*}
    \iota_*: \mathbb{C}(\mathbb{S}) \rightarrow \mathbb{C}(\hat{\mathbb{S}}), \quad \iota_*(A) = \overline{\iota(A)}^{\hat{\mathbb{S}}},
\end{align*}
is robust.
\end{lemma}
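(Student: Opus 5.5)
We verify the $\epsilon$--$\delta$ condition~\eqref{eq:robustness_epsilon_delta} directly, rather than going through Theorem~\ref{theorem_monotonic_robust}. Monotonicity of $\iota_*$ is immediate: if $A \supseteq A'$ then $\iota(A) \supseteq \iota(A')$, and taking closures preserves inclusion. So let $L > 0$ be a Lipschitz constant for $\iota$, meaning $d_{\hat{\mathbb{S}}}(\iota(x),\iota(y)) \le L\, d_{\mathbb{S}}(x,y)$. Fix $C \in \mathbb{C}(\mathbb{S})$ and $\epsilon > 0$. We claim that $\delta := \epsilon / L$ works uniformly in $C$, that is,
\begin{equation*}
\iota_*(C_\delta) \subseteq \iota_*(C)_\epsilon .
\end{equation*}

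The first step is a pointwise estimate on the open neighbourhood. Let $y \in B(C,\delta)$. Then there is $x \in C$ with $d_{\mathbb{S}}(x,y) < \delta$, so $d_{\hat{\mathbb{S}}}(\iota(x),\iota(y)) < L\delta = \epsilon$. Since $\iota(x) \in \iota(C) \subseteq \iota_*(C)$, this gives $\iota(y) \in B(\iota_*(C),\epsilon)$. Hence
\begin{equation*}
\iota\bigl(B(C,\delta)\bigr) \subseteq B\bigl(\iota_*(C),\epsilon\bigr) \subseteq \iota_*(C)_\epsilon .
\end{equation*}

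The second step handles the two closure operations, one in $\mathbb{S}$ (defining $C_\delta$) and one in $\hat{\mathbb{S}}$ (defining $\iota_*$). This is the only mildly delicate point. Since $\iota$ is continuous, $\iota(\overline{B(C,\delta)}) \subseteq \overline{\iota(B(C,\delta))}^{\hat{\mathbb{S}}}$. By the first step, the right-hand side is contained in the closed set $\iota_*(C)_\epsilon$. Therefore $\iota(C_\delta) \subseteq \iota_*(C)_\epsilon$. Taking the closure in $\hat{\mathbb{S}}$ of the left-hand side, and using again that $\iota_*(C)_\epsilon$ is closed by definition, yields $\iota_*(C_\delta) = \overline{\iota(C_\delta)}^{\hat{\mathbb{S}}} \subseteq \iota_*(C)_\epsilon$, as claimed.

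The degenerate case $C = \emptyset$ needs no separate treatment: then $C_\delta = \emptyset$ and both sides are empty, so the inclusion holds trivially. Since $C$ and $\epsilon$ were arbitrary, $\iota_*$ is robust at every $C$, and hence robust. By Theorem~\ref{theorem_monotonic_robust}, it is then also continuous for the robust topologies.
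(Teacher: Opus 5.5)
Your proof is correct and follows essentially the same route as the paper: verify the $\epsilon$--$\delta$ condition~\eqref{eq:robustness_epsilon_delta} by pushing the Lipschitz estimate through the open neighbourhood $B(C,\delta)$ and then passing to closures. The only difference is cosmetic. You aim directly at the closed set $(\iota_*(C))_\epsilon$ and use continuity of $\iota$, which lets you take $\delta = \epsilon/L$ exactly. This avoids the paper's slack $\delta < \delta_0 < \epsilon/K$ and its appeal to the auxiliary facts (F.1) and (F.2).
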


\begin{proof}
We use the $\epsilon$-$\delta$ formulation of~\eqref{eq:robustness_epsilon_delta}, that is, we prove that, for any given $A \in \mathbb{C}( \mathbb{S})$: 
\begin{equation*}
\forall \epsilon > 0 , \exists \delta > 0: \ 
\iota_*(A_\delta) \subseteq (\iota_*(A))_{\epsilon}.    
\end{equation*}
Let $k$ be the Lipschitz constant corresponding to the embedding $\iota$ and define $K := \max \{ k, 1\}$. Let $d_{\mathbb{S}}$ and $d_{\hat{\mathbb{S}}}$ be the metrics on $\mathbb{S}$ and $\hat{\mathbb{S}}$, respectively, then:
\begin{equation}
\label{eq:K_Lip_ineq}
    \forall x, y \in \mathbb{S}: \ d_{\hat{\mathbb{S}}}( \iota(x), \iota(y)) \leq K \cdot d_{\mathbb{S}}( x, y).
\end{equation}

We now state two facts which we will use later on: $\forall A \in \mathbb{P}(\mathbb{S}) , \forall \delta > 0$:
\begin{itemize}
    \item (F.1) $B(A, \delta) = B(\bar{A},\delta).$ \cite[Remark~4.2]{moggi2018safe}

    \item (F.2) $B(A,\delta) \subseteq A_\delta := \overline{B(A,\delta)} \subseteq B(A,\delta'),$ $ \forall  \delta' > \delta$. \cite[Proposition~2.6]{farjudian2023robustness}
\end{itemize}
\noindent
Additionally, we will introduce the following notation: for $A \in  \mathbb{C}(\mathbb{S})$ and $B \in \mathbb{C} (\hat{\mathbb{S}})$, write
\begin{align*}
    &B_\mathbb{S}(A,\delta) \coloneqq \{x \in \mathbb{S} \ | \ \exists a \in A, d_\mathbb{S}(x,a) < \delta \}, \\
    &B_{\hat{\mathbb{S}}}(B,\epsilon) \coloneqq  \{y \in \hat{\mathbb{S}} \ |\ \exists b \in B, d_{\hat{\mathbb{S}}}(y,b) < \epsilon \}.
\end{align*}

Let $A \in \mathbb{C}(\mathbb{S})$ and  $\epsilon > 0$ be given. Choose $\delta_0 < \epsilon/K$ and let $y \in \iota(B_\mathbb{S}(A,\delta_0))$. Then $y = \iota(x)$ for some $x \in B_\mathbb{S}(A,\delta_0).$ By definition of $B_\mathbb{S}(A,\delta_0)$, there exists an  $a \in A$ such that $d_{\mathbb{S}}(x,a) < \delta_0$. From~\eqref{eq:K_Lip_ineq}, we obtain:
\begin{equation*}
d_{\hat{\mathbb{S}}}(y,\iota(a)) =  d_{\hat{\mathbb{S}}}(\iota(x), \iota(a)) \leq  K \cdot d_{\mathbb{S}}(x,a) <K \delta_0 < \epsilon.  
\end{equation*}
Therefore, $\iota(a)$ witnesses the fact that $y \in B_{\hat{\mathbb{S}}}(\iota(A),\epsilon)$. Hence, we obtain:
\begin{equation*}
 \iota(B_\mathbb{S}(A,\delta_0)) \subseteq B_{\hat{\mathbb{S}}}(\iota(A) , K \delta_0)  \subseteq  B_{\hat{\mathbb{S}}}(\iota(A) , \epsilon).   
\end{equation*}
If we apply (F.2) in the space $\hat{\mathbb{S}}$ we get
\begin{equation*}
\forall \gamma >0: \quad
 \overline{\iota(B_\mathbb{S}(A,\delta_0)}^{\hat{\mathbb{S}}} \subseteq \overline{B_{\hat{\mathbb{S}}}(\iota(A) , K \delta_0)}^{\hat{\mathbb{S}}} \subseteq B_{\hat{\mathbb{S}}}(\iota(A), K\delta_0 + \gamma).
\end{equation*}
Since $K\delta_0 < \epsilon$, for any $\gamma < \epsilon - K\delta_0$, we have:
$$B_{\hat{\mathbb{S}}}(\iota(A), K\delta_0 + \gamma) \subseteq B_{\hat{\mathbb{S}}}(\iota(A), \epsilon).$$
\noindent
From (F.1), we know that $B_{\hat{\mathbb{S}}}(\iota(A), \epsilon) =  B_{\hat{\mathbb{S}}}( \overline{\iota(A)}^{\hat{\mathbb{S}}}, \epsilon)$, so if we put everything together, we get:
\begin{equation}
\label{eq:thm_intermediate}
 \overline{\iota(B_\mathbb{S}(A,\delta_0))}^{\hat{\mathbb{S}}} \subseteq B_{\hat{\mathbb{S}}}( \overline{\iota(A)}^{\hat{\mathbb{S}}}, \epsilon).   
\end{equation}

Now, choose any $\delta < \delta_0$, e.g., $\delta = \delta_0 / 2$. We have:
\begin{eqnarray}
    (\text{By F.2}) & & A_\delta \subseteq B_{\mathbb{S}}( A, \delta_0) \nonumber \\ 
    (\text{$\iota$ is monotonic}) & \implies & \iota(A_\delta) \subseteq \iota( B_{\mathbb{S}}( A, \delta_0)) \nonumber\\
    (\text{closure is monotonic})
    & \implies & \overline{\iota(A_\delta)}^{\hat{\mathbb{S}}}  \subseteq \overline{\iota( B_{\mathbb{S}}( A, \delta_0))}^{\hat{\mathbb{S}}}. \label{eq:closure_iota}
\end{eqnarray}
Finally, we obtain:
\begin{eqnarray*}
& & \iota_*( A_\delta)\\
(\text{By definition}) & = & \overline{\iota(A_\delta)}^{\hat{\mathbb{S}}} \\
(\text{By~\eqref{eq:closure_iota}}) & \subseteq & \overline{\iota( B_{\mathbb{S}}( A, \delta_0))}^{\hat{\mathbb{S}}}\\
(\text{By~\eqref{eq:thm_intermediate}}) & \subseteq & B_{\hat{\mathbb{S}}}( \overline{\iota(A)}^{\hat{\mathbb{S}}}, \epsilon)\\
(\text{By F.2}) & \subseteq & (\iota_*(A))_{\epsilon}.
\end{eqnarray*}
\end{proof}

\begin{theorem}
\label{thm:right_adj_robust}
If the embedding $\iota: \mathbb{S} \hookrightarrow \hat{\mathbb{S}}$ is Lipschitz continuous, then the right adjoint $\iota_*$ is continuous with respect to the robust topologies on $\mathbb{C}(\mathbb{S})$ and $\mathbb{C}(\hat{\mathbb{S}})$.
\end{theorem}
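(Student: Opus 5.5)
The plan is to combine Lemma~\ref{lemma:right_adj_robust} with Theorem~\ref{theorem_monotonic_robust}. That theorem says a map between lattices of closed subsets of metric spaces is continuous with respect to the robust topologies exactly when it is monotonic and robust. Both $\mathbb{S}$ and $\hat{\mathbb{S}} = (\hat{S}, d_{\hat{\mathbb{S}}})$ are metric spaces, so the theorem applies to $\iota_*: \mathbb{C}(\mathbb{S}) \to \mathbb{C}(\hat{\mathbb{S}})$ with $\mathbb{S}_1 := \mathbb{S}$ and $\mathbb{S}_2 := \hat{\mathbb{S}}$. It therefore suffices to check the two conditions in item~1 of Theorem~\ref{theorem_monotonic_robust}.

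\emph{Monotonicity.} Suppose $A \supseteq A'$ in $\mathbb{C}(\mathbb{S})$. Then $\iota(A) \supseteq \iota(A')$, and since closure is monotone, $\overline{\iota(A)}^{\hat{\mathbb{S}}} \supseteq \overline{\iota(A')}^{\hat{\mathbb{S}}}$. Hence $\iota_*$ is monotonic with respect to the superset orders. This was already noted when the adjoints were defined, and it also follows from $\iota_*$ being a right adjoint in $\po$.

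\emph{Robustness.} This is exactly the content of Lemma~\ref{lemma:right_adj_robust}, under the standing hypothesis that $\iota$ is Lipschitz. So item~1 holds, and item~2 of Theorem~\ref{theorem_monotonic_robust} gives the continuity claimed.

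There is essentially no obstacle here. The only point needing care is confirming that Theorem~\ref{theorem_monotonic_robust} is stated for arbitrary metric spaces on both sides. It is, so we do not need compactness of $\hat{\mathbb{S}}$ or completeness of either space for this step.
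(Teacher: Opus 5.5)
Your proposal is correct and is essentially the paper's proof. The paper likewise obtains the theorem from Theorem~\ref{theorem_monotonic_robust}, combining Lemma~\ref{lemma:right_adj_robust} (robustness of $\iota_*$) with the monotonicity of the right adjoint.
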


\begin{proof}
By Theorem~\ref{theorem_monotonic_robust}, the claim follows from Lemma~\ref{lemma:right_adj_robust} and the fact that the right adjoint is monotonic.
\end{proof}

\begin{corollary}
\label{cor:compos_approx}
If the embedding $\iota: \mathbb{S} \hookrightarrow \hat{\mathbb{S}}$ is Lipschitz continuous, then given a map $f: \mathbb{C}(\mathbb{S}) \to \Sigma$ and a Scott-continuous approximation $\hat{f} : \mathbb{C}(\hat{\mathbb{S}})  \to \Sigma$ of $f \circ \iota^*$, the map $\hat{f} \circ \iota_*: \mathbb{C}(\mathbb{S}) \rightarrow \Sigma$ is a robust approximation of $f$.     
\end{corollary}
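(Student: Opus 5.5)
The plan is to split the claim into two parts: robustness of $\hat{f} \circ \iota_*$, and the fact that it approximates $f$. The word ``approximation'' is read in the order of $\Sigma$. Saying $\hat{f}$ approximates $f \circ \iota^*$ means $\hat{f} \sqsubseteq f \circ \iota^*$ pointwise. We must then show $\hat{f} \circ \iota_* \sqsubseteq f$ pointwise.

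\textbf{Robustness.} First, $\iota_*$ is continuous from the robust topology on $\mathbb{C}(\mathbb{S})$ to the robust topology on $\mathbb{C}(\hat{\mathbb{S}})$, by Theorem~\ref{thm:right_adj_robust}. Since $\hat{\mathbb{S}}$ is a compact metric space, the robust and Scott topologies on $\mathbb{C}(\hat{\mathbb{S}})$ coincide by~\cite[Theorem~A.4]{moggi2018safe}. Hence $\hat{f}$, being Scott-continuous, is continuous from the robust topology on $\mathbb{C}(\hat{\mathbb{S}})$. On $\Sigma \cong \mathbb{C}(\{*\})$ the same theorem applies, because the singleton is compact, so Scott-open and robust-open sets agree there as well. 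The composite $\hat{f} \circ \iota_*$ is therefore continuous between the robust topologies. It is monotone as a composite of monotone maps, since Scott-continuous maps are monotone. Theorem~\ref{theorem_monotonic_robust} then gives that it is monotonic and robust.

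\textbf{Approximation.} For $A \in \mathbb{C}(\mathbb{S})$ we have
\[
\hat{f}(\iota_*(A)) \sqsubseteq f(\iota^*(\iota_*(A))).
\]
The unit of the adjunction gives $A \subseteq \iota^*(\iota_*(A))$. In the superset order this says $\iota^*(\iota_*(A)) \sqsubseteq A$. Monotonicity of $f$ (reachability and safety maps are assumed monotonic throughout) then yields $f(\iota^*\iota_*(A)) \sqsubseteq f(A)$. Combining the two, $\hat{f}\circ\iota_* \sqsubseteq f$.

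\textbf{Main obstacle.} The delicate point is not a calculation but pinning down the definitions. We need the precise meaning of ``approximation''. We also need to confirm that $\hat{\mathbb{S}}$ is metric, so that the robust topology on $\mathbb{C}(\hat{\mathbb{S}})$ makes sense and Theorem~A.4 applies. Finally, we must check that $f$ is monotone, since the approximation step depends on it. If $f$ were not assumed monotone, we would need to state that assumption explicitly.
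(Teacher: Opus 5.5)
Your proof is correct and follows the paper's argument. Robustness of $\hat{f}\circ\iota_*$ comes from Theorem~\ref{thm:right_adj_robust} together with the coincidence of the Scott and robust topologies on $\mathbb{C}(\hat{\mathbb{S}})$ and $\Sigma$, and the approximation comes from $\hat{f}\circ\iota_* \sqsubseteq f\circ\iota^*\circ\iota_* \sqsubseteq f$. You are right to make explicit that the last step needs $f$ to be monotone, which the paper uses tacitly; also, $\iota^*\circ\iota_* \sqsubseteq \mathrm{id}_{\mathbb{C}(\mathbb{S})}$ is the counit of $\iota^* \dashv \iota_*$, not the unit.
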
 

\begin{proof}
The metric spaces $\hat{\mathbb{S}}$ and (the singleton metric space) $\{ * \}$ are both compact. Hence, the Scott and robust topologies coincide on $\mathbb{C}(\hat{\mathbb{S}})$ and $\Sigma = \mathbb{C}(\{ * \})$, and as a result, $\hat{f}: \mathbb{C}(\hat{\mathbb{S}})  \to \Sigma$ is robust continuous. By Theorem~\ref{thm:right_adj_robust}, the right adjoint $\iota_*$ is also robust continuous. Therefore, the composition $\hat{f} \circ \iota_*: \mathbb{C}(\mathbb{S}) \rightarrow \Sigma$ is robust continuous.

Also, by assumption, $\hat{f}$ is an approximation of $f \circ \iota^*$. Hence:
\begin{eqnarray*}
& &    \hat{f} \sqsubseteq f \circ \iota^*\\
(\text{By monotonicity of function composition}) & \implies & \hat{f} \circ \iota_* \sqsubseteq f \circ \iota^* \circ \iota_* \\
(\iota^* \circ \iota_* \sqsubseteq \mathrm{id}_{\mathbb{C}(\mathbb{S})}) & \implies & \hat{f} \circ \iota_* \sqsubseteq f.\\
\end{eqnarray*}
As a result, $\hat{f} \circ \iota_*$ is indeed an approximation of $f$.
\end{proof}

\section{Horofunction Compactification}
\label{sec:horofunction_compactification}

In this section, we present a compactification for metric spaces called the \emph{horofunction compactification}, which was introduced by Gromov~\cite{Gromov:Horofunctions:1981}. We let  $\overline{\mathbb{S}}^h$ denote the horofunction compactification of a given metric space $\mathbb{S}$. 

For any metric space $\mathbb{S}$, the space $\overline{\mathbb{S}}^h$ is compact. In general, however, $\overline{\mathbb{S}}^h$ is not a compactification of $\mathbb{S}$ since the associated \emph{horofunction extension} $h : \mathbb{S} \to \overline{\mathbb{S}}^h$ is not always a topological embedding (see~\cite{gaubert2012maximin,daniilidis2025horofunction} for further details). When $\overline{\mathbb{S}}^h$ is a compactification of $\mathbb{S}$, we say that $\mathbb{S}$ is \emph{Gromov-compactifiable}, examples of which include reflexive Banach spaces~\cite[Corollary~3.8]{daniilidis2025horofunction}. See~\cite{daniilidis2025horofunction} for more examples.

\subsection{The Horofunction Extension}
\label{subsection_desript}

As described in~\cite[Section 1.2]{daniilidis2025horofunction}, for a metric space $\mathbb{S} := (S,d)$ and a fixed basepoint $x_0 \in S$, we assign a map $h_{x_0,z}$ to each $z \in S$ by:
\begin{equation*}
h_{x_0,z}(\cdot) \coloneqq d(\cdot,z) - d(x_0,z).   
\end{equation*}
Note that $h_{x_0}: \mathbb{S} \to C( \mathbb{S}; \mathbb{R})$, in which $C( \mathbb{S}; \mathbb{R})$ denotes the space of continuous real-valued functions on $\mathbb{S}$. Since the basepoint is fixed, for our purposes we remove $x_0$ from the subscript and simply write $h$ and $h_z$ instead of $h_{x_0}$ and $h_{x_0, z}$.

Given the above mapping, $\overline{\mathbb{S}}^h$ can be specified in different (but equivalent) ways~\cite{daniilidis2025horofunction}, e.g.:
\begin{enumerate}
    \item as the closure of $h(S)$ in $C(\mathbb{S}; \Real)$ with respect to the compact-open topology on $C(\mathbb{S}; \Real)$, 
    \item as the pointwise closure of $h( S)$
in the closed subspace $\mathrm{Lip}^1_{x_0}(\mathbb{S})$ of the product topology on $\mathbb{R}^\mathbb{S}$, in which 
$\mathrm{Lip}^1_{x_0}(\mathbb{S})$ is the space of all $1$-Lipschitz real-valued functions on $\mathbb{S}$ vanishing
at $x_0$:
\begin{equation*}
    \mathrm{Lip}^1_{x_0}(\mathbb{S}) \coloneqq  \{f: \mathbb{S}\rightarrow \mathbb{R} \mid f(x_0) = 0, \forall x,y \in S: |f(x) - f(y)| \leq d( x, y) \}.
\end{equation*}
\end{enumerate}
We focus on the second construction (see~\cite[Section~2.1]{daniilidis2025horofunction} for a full overview). Hence, $\overline{\mathbb{S}}^h = \overline{\{h_z \mid \textcolor{black}{z\in S\}}}$, where the closure is taken with respect to the topology of pointwise convergence.\footnote{The set $h(S) =\{h_{z} \mid z \in S\}$ is commonly referred to as the collection of \emph{internal metric functionals}.}

\begin{lemma}
\label{lemma_horo_lip}
For each basepoint $x_0 \in S$ and fixed $y \in S$, we have $h_y \in \mathrm{Lip}^1_{x_0}(\mathbb{S})$.
\end{lemma}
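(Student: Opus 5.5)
The plan is to check the two conditions that define membership in $\mathrm{Lip}^1_{x_0}(\mathbb{S})$ directly from the formula $h_y(\cdot) = d(\cdot, y) - d(x_0, y)$. We also note that $h_y$ is a well-defined real-valued function on $S$, since $d$ takes finite real values.

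First, we verify the normalisation at the basepoint. Substituting $x_0$ gives
\begin{equation*}
h_y(x_0) = d(x_0, y) - d(x_0, y) = 0.
\end{equation*}

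Second, we verify the $1$-Lipschitz condition. For arbitrary $x, z \in S$, the constant term $d(x_0,y)$ cancels, so
\begin{equation*}
|h_y(x) - h_y(z)| = |d(x,y) - d(z,y)|.
\end{equation*}
By the triangle inequality, $d(x,y) \le d(x,z) + d(z,y)$ and $d(z,y) \le d(z,x) + d(x,y)$. Together with the symmetry of $d$, these give the reverse triangle inequality $|d(x,y) - d(z,y)| \le d(x,z)$, which is exactly the required bound.

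There is no genuine obstacle here. The only point that needs any care is the reverse triangle inequality, which follows immediately from the two instances of the triangle inequality above. The choice of basepoint $x_0$ plays a role only in the normalisation step, so the argument works uniformly for every $x_0 \in S$ and every $y \in S$.
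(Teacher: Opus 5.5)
Your proposal is correct and follows essentially the same route as the paper: you verify $h_y(x_0)=0$ by direct substitution, then cancel the constant $d(x_0,y)$ and apply the reverse triangle inequality to get $|h_y(x)-h_y(z)| \le d(x,z)$. The only difference is that you spell out the reverse triangle inequality from two instances of the triangle inequality, where the paper simply cites the triangle inequality.
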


\begin{proof}
First, note that $h_y(x_0) = d(x_0,y) - d(y,x_0) = 0$. To show that $h_y$ is $1$-Lipschitz, assume that $x, z \in S$. We have:
\begin{equation*}
|h_y(x) - h_y(z)| = |d(x,y)-d(y,x_0) -d(z,y)+d(y,x_0)| = |d(x,y) -d(z,y)|.
\end{equation*}
By the triangle inequality, we obtain $|d(x,y) -d(z,y)| \leq d(x, z)$. Hence $|h_y(x) - h_y(z)| \leq d(x, z)$.
\end{proof}

\begin{lemma}
\label{lemma_horo_inj}
The map $h: \mathbb{S} \to \mathrm{Lip}^1_{x_0}(\mathbb{S})$ is injective and continuous.
\end{lemma}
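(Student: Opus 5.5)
The plan is to prove injectivity and continuity separately. By Lemma~\ref{lemma_horo_lip}, $h$ does map into $\mathrm{Lip}^1_{x_0}(\mathbb{S})$, which we equip with the subspace topology from the topology of pointwise convergence on $\mathbb{R}^{\mathbb{S}}$ (Definition~\ref{def_pointwise}).

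\emph{Injectivity.} Suppose $h_y = h_z$ for some $y, z \in S$. Evaluating at $x = y$ gives
\begin{equation*}
-d(x_0,y) = h_y(y) = h_z(y) = d(y,z) - d(x_0,z).
\end{equation*}
Evaluating at $x = z$ gives
\begin{equation*}
d(z,y) - d(x_0,y) = h_y(z) = h_z(z) = -d(x_0,z).
\end{equation*}
From the first equation, $d(y,z) = d(x_0,z) - d(x_0,y)$. From the second, $d(y,z) = d(x_0,y) - d(x_0,z)$. Adding the two yields $2d(y,z) = 0$, so $y = z$. The only point needing care here is choosing these two evaluation points.

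\emph{Continuity.} Since the pointwise topology is generated by the subbasis of sets $S(x,U)$, it suffices to show that for each fixed $x \in S$ the map $z \mapsto h_z(x) = d(x,z) - d(x_0,z)$ is continuous from $\mathbb{S}$ to $\mathbb{R}$. Indeed, the preimage of $S(x,U)$ under $h$ is exactly the preimage of $U$ under this evaluation map. By the triangle inequality,
\begin{equation*}
|h_z(x) - h_w(x)| \le |d(x,z) - d(x,w)| + |d(x_0,z) - d(x_0,w)| \le 2\,d(z,w),
\end{equation*}
so each coordinate map is $2$-Lipschitz and hence continuous. This bound is uniform in $x$, which also anticipates the $2$-Lipschitz property of Proposition~\ref{prop_embedding_lipschitz}.

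No step should be a real obstacle. The only subtlety is correctly reducing continuity into the product topology to continuity of each coordinate map.
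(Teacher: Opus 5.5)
Your proposal is correct and follows essentially the same route as the paper. Injectivity comes from evaluating $h_y = h_z$ at $y$ and at $z$ and adding the two identities, and continuity comes from reducing to the coordinate maps $z \mapsto d(x,z) - d(x_0,z)$; the paper cites Munkres' Theorem~19.6 and calls these maps ``clearly continuous'', whereas you argue via subbasic preimages and give the explicit $2$-Lipschitz bound.
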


\begin{proof} 
To prove injectivity, let $y, y' \in S$ and suppose that $h_y = h_{y'}$. Then: $\forall z \in S: d(z,y) - d(x_0,y) = d(z,y') - d(x_0,y')$. If we let $z=y$, then
\begin{equation*}
    d(y,y) -d(x_0,y) = d(y,y') - d(x_0,y')   \implies d(y,y') = d(x_0,y') - d(x_0,y).
\end{equation*} 
Now if we instead let $z = y'$, we get
\begin{equation*}
 d(y',y) - d(x_0,y) = d(y',y') - d(x_0,y') \implies d(y,y') =   d(x_0,y) - d(x_0,y').    
\end{equation*}
Adding the two equations gives us $d( y, y') = 0$, which implies $y = y'$. Therefore, the map $h(y) = h_y$ is injective.

To prove continuity, note that the space $\mathrm{Lip}^1_{x_0}(\mathbb{S})$ carries the subspace topology on $\mathbb{R}^\mathbb{S}$ equipped with the topology of pointwise convergence (equivalently, the product topology). Therefore, the map $h: \mathbb{S} \rightarrow \mathrm{Lip}^1_{x_0}(\mathbb{S})$ is continuous if and only if each coordinate map is continuous \cite[Theorem~19.6]{munkres2017topology}. For a fixed $x \in \mathbb{S}$, take the coordinate projection $\pi_x : \mathbb{R}^\mathbb{S} \rightarrow \mathbb{R}$ defined by:
\begin{equation*}
\forall f \in \mathbb{R}^\mathbb{S}: \quad  
 \pi_x(f) = f(x).   
\end{equation*}
The composite $ \pi_x \circ h: \mathbb{S} \rightarrow \mathbb{R}$ is given by, 
\begin{equation*}
(\pi_x \circ h)(z) = h_z(x) = d(x,z) - d(x_0,z).    
\end{equation*}
\noindent
This is clearly continuous in $z$. Therefore, for each fixed $x \in \mathbb{S}$, the map $\pi_x \circ h$  is continuous. Since each coordinate projection is continuous, the map $h$ is continuous as a map into $\mathbb{R}^\mathbb{S}$ and, hence also into $\mathrm{Lip}^1_{x_0}(\mathbb{S}).$
\end{proof}

The following variant of Ascoli's Theorem 
 is useful in demonstrating the equivalence of the two ways of obtaining the compactification:

\begin{theorem}[Ascoli's Theorem~{\cite[Theorem~47.1]{munkres2017topology}}] 
\label{theorem_ascoli}
Let $\mathbb{X}$ be a topological space and let $\mathbb{Y} := (Y, d)$ be a metric space.
Equip $ C(\mathbb{X}; \mathbb{Y})$ with the topology of compact convergence\footnote{Equivalently, this is the compact-open topology since $Y$ is a metric space.} and let $\mathcal{F}$ be a subset of $\mathcal{C}( \mathbb{X}; \mathbb{Y})$.
\begin{itemize}
\item If $\mathcal{F}$ is equicontinuous and for all $a \in X$ the set $\mathcal{F}_a := \{ f (a) \mid f \in \mathcal{F} \}$ has compact closure, then $\mathcal{F}$ is contained in a compact subspace of $\textcolor{black}{C} ( \mathbb{X}; \mathbb{Y})$. 
\item The converse holds if $\mathbb{X}$ is locally compact and Hausdorff.
\end{itemize}
\end{theorem}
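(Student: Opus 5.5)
The plan is to follow the classical Tychonoff-based proof of Ascoli's theorem. For the first item, I will not work in $C(\mathbb{X};\mathbb{Y})$ directly. Instead I will work in the full function space $\mathbb{Y}^X$ with the topology of pointwise convergence, which is the product topology (Definition~\ref{def_pointwise}). Let $\mathcal{G}$ be the closure of $\mathcal{F}$ in $\mathbb{Y}^X$. Since a function $g$ lies in $S(a,U)$ only if $g(a)\in U$, every $g\in\mathcal{G}$ satisfies $g(a)\in\overline{\mathcal{F}_a}$. Hence $\mathcal{G}\subseteq\prod_{a\in X}\overline{\mathcal{F}_a}$. This product is compact by Tychonoff's theorem, and $\mathcal{G}$ is closed in it, so $\mathcal{G}$ is compact in the product topology.

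Next I will show that $\mathcal{G}$ is equicontinuous, and in particular $\mathcal{G}\subseteq C(\mathbb{X};\mathbb{Y})$. Fix $x_0$ and $\varepsilon>0$, and choose $U$ so that $d(f(x),f(x_0))<\varepsilon/3$ for all $f\in\mathcal{F}$ and $x\in U$. Take $g\in\mathcal{G}$ and $x\in U$. The set of functions close to $g$ at the two points $x$ and $x_0$ is a pointwise-open neighbourhood of $g$, so it contains some $f\in\mathcal{F}$ with $d(f(x),g(x))<\varepsilon/3$ and $d(f(x_0),g(x_0))<\varepsilon/3$. The triangle inequality then gives $d(g(x),g(x_0))<\varepsilon$.

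The key step, and the one I expect to be the main obstacle, is to show that on an equicontinuous family the topology of pointwise convergence coincides with the topology of compact convergence. Only the inclusion ``compact convergence open $\Rightarrow$ pointwise open'' needs work. Fix $g\in\mathcal{G}$, a compact $C\subseteq X$ and $\varepsilon>0$, and consider the basic neighbourhood $B_C(g,\varepsilon)$ for compact convergence. By equicontinuity of $\mathcal{G}$, each $x\in C$ has a neighbourhood $U_x$ on which every member of $\mathcal{G}$ varies by less than $\varepsilon/4$. By compactness of $C$, finitely many $U_{x_1},\dots,U_{x_n}$ cover $C$. Let $W$ be the pointwise-open set of those $h$ with $d(h(x_i),g(x_i))<\varepsilon/4$ for all $i$. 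For $h\in W\cap\mathcal{G}$ and $x\in U_{x_i}\cap C$, a three-step triangle inequality gives $d(h(x),g(x))<3\varepsilon/4<\varepsilon$, so $W\cap\mathcal{G}\subseteq B_C(g,\varepsilon)$. Consequently $\mathcal{G}$ is also compact in the compact-convergence topology, and it contains $\mathcal{F}$.

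For the converse, assume $\mathbb{X}$ is locally compact Hausdorff and $\mathcal{F}\subseteq\mathcal{H}$ with $\mathcal{H}$ compact. Evaluation $e_a\colon f\mapsto f(a)$ is continuous for compact convergence, since $\{a\}$ is compact. Hence $e_a(\mathcal{H})$ is compact, and so $\mathcal{F}_a\subseteq e_a(\mathcal{H})$ has compact closure because $\mathbb{Y}$ is Hausdorff. For equicontinuity at $x_0$, I will use local compactness to pick a compact neighbourhood $K$ of $x_0$. Restriction to $K$ is continuous from $C(\mathbb{X};\mathbb{Y})$ into $C(K;\mathbb{Y})$ with the uniform metric, so the image of $\mathcal{H}$ is compact, hence totally bounded. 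I then choose a finite $\varepsilon/3$-net $f_1,\dots,f_n$ of this image in the uniform metric on $K$, and take a neighbourhood $U\subseteq K$ of $x_0$ on which each $f_i$ varies by less than $\varepsilon/3$. Given $f\in\mathcal{F}$, pick $f_i$ uniformly within $\varepsilon/3$ of $f$ on $K$. For $x\in U$ the triangle inequality yields $d(f(x),f(x_0))<\varepsilon$, which is the required equicontinuity.
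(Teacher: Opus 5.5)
Your proposal is correct and is essentially the standard proof of Munkres, Theorem~47.1. The paper does not prove the theorem itself but cites that proof, and later quotes its key step that the pointwise and compact-convergence topologies coincide on the closure $\mathcal{G}$. Your argument has the same structure: Tychonoff on $\mathcal{G}\subseteq\prod_{a}\overline{\mathcal{F}_a}$, equicontinuity of $\mathcal{G}$, that coincidence of topologies, and for the converse, evaluation maps plus total boundedness of the restrictions to a compact neighbourhood. One small difference is that you use continuity of evaluation at a single point rather than the joint evaluation map, so your converse never uses the Hausdorff assumption.
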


For a given metric space $\mathbb{S} := ( S,d)$, we endow $C(\mathbb{S};\mathbb{R})$ with the compact-open topology. For a fixed basepoint $x_0 \in S$, we consider $\mathcal{F} = \mathrm{Lip}^1_{x_0}(\mathbb{S})$. For any $f \in \mathrm{Lip}^1_{x_0}(\mathbb{S})$ and $x \in S$, we have $|f(x)| \leq d(x, x_0)$ and hence $f(x) \in [-d(x,x_0), d(x,x_0)]$. For each $x \in S$, define $C_x := [-d(x,x_0), d(x,x_0)]$ and $\mathcal{F}_x = \{f(x) \mid f \in \mathrm{Lip}^1_{x_0}(\mathbb{S})\}$. As such, $\forall x \in S: \mathcal{F}_x \subseteq C_x$. Since $C_x$ is compact, $\mathcal{F}_x$ has compact closure for each $x \in S$. It is also straightforward to show that $\mathrm{Lip}^1_{x_0}(\mathbb{S})$ is equicontinuous. Furthermore:
\begin{equation*}
\mathrm{Lip}^1_{x_0}(\mathbb{S}) \subset \prod_{x \in \mathbb{S}}C_x .
\end{equation*}

Given an equicontinuous family $\mathcal{F}$, if $\mathcal{G}$ is defined to be the closure of $\mathcal{F}$ in the product topology on $\mathbb{Y}^X$, then the product topology on $\mathbb{Y}^X$ and the compact convergence topology on $\textcolor{black}{C}(\mathbb{X};  \mathbb{Y})$ coincide on the subset $\mathcal{G}$~\cite[Proof of Theorem~47.1]{munkres2017topology}. It is known that $\mathrm{Lip}^1_{x_0}(\mathbb{S})$ is indeed closed in the product topology. Hence, the product topology and the compact convergence topology coincide on $\mathrm{Lip}^1_{x_0}(\mathbb{S})$. Furthermore, when $\mathbb{Y}$ is a metric space, the compact open topology on $\textcolor{black}{C} (\mathbb{X}; \mathbb{Y})$  coincides with the topology of compact convergence and hence also the product topology~\cite[Theorem~46.8]{munkres2017topology}. From Lemmas~\ref{lemma_horo_lip} and ~\ref{lemma_horo_inj}, we know that $h$ is a continuous injection into $\mathrm{Lip}^1_{x_0}(\mathbb{S})$. If $\mathbb{S}$ is Gromov-compactifiable, its compactification $\overline{\mathbb{S}}^h$ is then obtained as the pointwise closure of $h(S)$ taken in $\mathrm{Lip}^1_{x_0}(\mathbb{S})$. 

As shown in~\cite[Section~2.1]{daniilidis2025horofunction}, $\overline{\mathbb{S}}^h$ is independent of the choice of the basepoint $x_0 \in S$ and furthermore:
\begin{proposition}[{\cite[Proposition 1.2]{daniilidis2025horofunction}}]
\label{prop:horo_dense_separable}
    Let $\mathbb{S} := ( S, d)$ be a metric space:
    \begin{enumerate}
        \item \label{item:dense_subset} If $Z$ is a dense subset of $S$, then $\overline{Z}^h = \overline{\mathbb{S}}^h$. 
        \item \label{item:separable_metrizable}  If $\mathbb{S}$ is separable,  $\overline{\mathbb{S}}^h$ is metrizable.
    \end{enumerate}
\end{proposition}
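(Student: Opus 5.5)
For part~\ref{item:dense_subset}, I would fix a basepoint $x_0 \in Z$; this is harmless because $\overline{\mathbb{S}}^h$ does not depend on the choice of basepoint. Both compactifications are then pointwise closures of families of $1$-Lipschitz functions vanishing at $x_0$. The first step is to compare the families $\{h_z \mid z \in Z\}$ and $\{h_z \mid z \in S\}$ inside $\mathrm{Lip}^1_{x_0}(\mathbb{S})$. Here I read $\overline{Z}^h$ as a space of functions on $Z$, and identify it with a subspace of $\mathrm{Lip}^1_{x_0}(\mathbb{S})$ via the unique $1$-Lipschitz extension from the dense set $Z$ to $S$. This identification is a bijection, since a $1$-Lipschitz function on $Z$ extends uniquely to $S$.

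\textbf{Topologies agree.} On the equicontinuous family $\mathrm{Lip}^1_{x_0}(\mathbb{S})$, pointwise convergence on $Z$ is equivalent to pointwise convergence on $S$. Given $x \in S$ and $\epsilon>0$, choose $z \in Z$ with $d(x,z)<\epsilon/3$. Then $|f(x)-g(x)| \le |f(z)-g(z)| + 2\epsilon/3$ for all $f,g$. So restriction to $Z$ is a homeomorphism of $\mathrm{Lip}^1_{x_0}(\mathbb{S})$ onto its image.

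\textbf{Sets agree.} For $z \in S$, pick $z_n \in Z$ with $z_n \to z$. By Lemma~\ref{lemma_horo_inj} (continuity of $h$), $h_{z_n} \to h_z$ pointwise. Hence $h(S) \subseteq \overline{h(Z)}$, and taking closures gives $\overline{h(Z)} = \overline{h(S)}$. Combined with the homeomorphism, this yields $\overline{Z}^h = \overline{\mathbb{S}}^h$.

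For part~\ref{item:separable_metrizable}, let $D = \{x_n\}_{n\in\Nat}$ be a countable dense subset of $S$. On $\mathrm{Lip}^1_{x_0}(\mathbb{S})$ I would define
\[
\rho(f,g) := \sum_{n} 2^{-n} \frac{|f(x_n)-g(x_n)|}{1+|f(x_n)-g(x_n)|}.
\]
This is a pseudometric. It separates points because Lipschitz functions agreeing on a dense set are equal. It induces pointwise convergence on $D$, which by the equicontinuity argument above equals pointwise convergence on $S$. Hence $\rho$ metrizes the subspace topology of $\mathrm{Lip}^1_{x_0}(\mathbb{S})$, and in particular metrizes its closed subset $\overline{\mathbb{S}}^h$.

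An alternative route is that $\overline{\mathbb{S}}^h$ is compact Hausdorff and embeds continuously and injectively into the metrizable space $\prod_n C_{x_n}$. A continuous injection from a compact space into a Hausdorff space is an embedding, so metrizability follows.

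The main obstacle is the identification step in part~\ref{item:dense_subset}: making sure that functions on $Z$ and functions on $S$ are compared correctly, and that the basepoint change is legitimate. The density estimate itself is routine.
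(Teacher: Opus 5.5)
Your proof is correct, and it is essentially the standard argument behind the cited result; the paper gives no proof of its own, only the citation to Daniilidis et al. That argument runs as follows: the equicontinuity estimate makes restriction to the dense set $Z$ a homeomorphism of $\mathrm{Lip}^1_{x_0}(\mathbb{S})$ onto $\mathrm{Lip}^1_{x_0}(Z)$, $h(Z)$ is pointwise dense in $h(S)$, and for countable $Z$ a product metric on $\mathbb{R}^Z$ metrizes the result. Your metric $\rho$ plays the same role as the paper's $d_*$, which the paper uses in exactly this way when it restricts to a countable dense $Z$ in Section~\ref{sec:horofunction_compactification}.
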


\subsection{Applying the Horofunction Extension.}

The product space $\mathbb{R}^{\mathbb{S}}$ is only metrizable if the indexing set $\mathbb{S}$ is countable. Non-trivial Banach spaces are uncountable. So, we require separability to ensure that the compactification is metrizable.
Given a separable metric space $\mathbb{S}$ with a countable dense subset $Z$, by Proposition~\ref{prop:horo_dense_separable}, one can restrict the domain of the embedding to $Z$ to obtain the map $h|_Z: \mathbb{S} \rightarrow \mathbb{R}^Z.$ Taking the closure of the image $h|_Z(\mathbb{S})$ in the pointwise topology on $\mathbb{R}^Z$ yields a space that is both metrizable and homeomorphic to the full horofunction compactification $\overline{\mathbb{S}}^h$. A metric which induces the product topology on $\mathbb{R}^Z$ is given by:
\begin{equation}
\label{eq:d_*}
\forall \mathbf{x},\mathbf{y} \in \mathbb{R}^Z  : \quad 
d_{*}(\mathbf{x},\mathbf{y})
=
\sum_{n=1}^{\infty} 2^{-n} \min(1, |x_n - y_n|).  
\end{equation}

\begin{proposition}\label{prop_embedding_lipschitz}

Assume that $(S,d)$ is a separable metric space and $Z \subseteq S$ is a countable dense subset. Equip $\mathbb{R}^Z$ with the metric $d_*$ from~\eqref{eq:d_*}. The horofunction extension $h: (S,d)\hookrightarrow (\mathbb{R}^Z,d_*)$ is Lipschitz continuous with Lipschitz constant $
    2$.
\end{proposition}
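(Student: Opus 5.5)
We enumerate the countable dense set as $Z = \{z_1, z_2, \ldots\}$, so that $h(y)$ is identified with the sequence $\bigl(h_y(z_n)\bigr)_{n \ge 1} \in \mathbb{R}^Z$. The strategy is to bound each coordinate difference $|h_y(z_n) - h_{y'}(z_n)|$ uniformly in $n$ by $2\,d(y,y')$. We then sum against the weights $2^{-n}$, using $\min(1,t) \le t$ and $\sum_{n \ge 1} 2^{-n} = 1$.

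The coordinate estimate comes first. For $y, y' \in S$ and any $z \in Z$, we write
\begin{equation*}
h_y(z) - h_{y'}(z) = \bigl(d(z,y) - d(z,y')\bigr) - \bigl(d(x_0,y) - d(x_0,y')\bigr).
\end{equation*}
By the reverse triangle inequality, as in the proof of Lemma~\ref{lemma_horo_lip} but now varying the second argument of $d$, each bracket is bounded in absolute value by $d(y,y')$. Hence
\begin{equation*}
|h_y(z) - h_{y'}(z)| \le 2\,d(y,y')
\end{equation*}
for every $z$, and in particular for every $z_n$. The factor $2$ arises because the basepoint normalisation term $d(x_0,\cdot)$ contributes its own $1$-Lipschitz variation.

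Next we sum:
\begin{equation*}
d_*(h(y),h(y')) = \sum_{n=1}^\infty 2^{-n}\min\bigl(1,|h_y(z_n)-h_{y'}(z_n)|\bigr) \le \sum_{n=1}^\infty 2^{-n}\cdot 2\,d(y,y') = 2\,d(y,y'),
\end{equation*}
which gives the claimed Lipschitz constant $2$. If the coordinates are indexed from $n=0$ instead, the weights change by a constant, so the sum should be checked against the convention in~\eqref{eq:d_*}. With indexing from $n = 1$ the weights sum to exactly $1$.

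There is no real obstacle. The only point requiring care is to compare $h_y$ and $h_{y'}$ at the same evaluation point, rather than using the $1$-Lipschitz property of a single $h_y$ in its argument, which is what Lemma~\ref{lemma_horo_lip} provides. Note also that the $\min(1,\cdot)$ truncation only helps the inequality, and that the argument does not use density of $Z$; density matters only for identifying the image with $\overline{\mathbb{S}}^h$ via Proposition~\ref{prop:horo_dense_separable}.
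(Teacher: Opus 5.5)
Your proposal is correct and follows essentially the same route as the paper. Both arguments bound each coordinate difference $|h_y(z_n)-h_{y'}(z_n)|$ by $2\,d(y,y')$, splitting it into the $d(z_n,\cdot)$ term and the basepoint term $d(x_0,\cdot)$ and controlling each by the triangle inequality, and then sum against the weights $2^{-n}$ using $\min(1,t)\le t$ and $\sum_{n\ge 1}2^{-n}=1$.
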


\begin{proof}
Assume that $Z = \{z_n\}_{n \in \mathbb{N}} \subseteq S$ and let $x,y \in S$. Define $ \mathbf{x} := (x_n)_{n \in \Nat}$ and $\mathbf{y} := (y_n)_{n \in \Nat}$ in $\mathbb{R}^Z$ by:
\begin{equation*}
\forall n \in \Nat: \ 
x_n := h_x(z_n), \quad y_n := h_y(z_n).
\end{equation*}
For any $n \in \mathbb{N}$, we have:
\begin{eqnarray*}
 |x_n - y_n| & = & |h_x(z_n) -  h_y(z_n)|\\
 & = &  \left| [d(z_n,x) - d(x_0,x)] - [d(z_n,y) - d(x_0,y)] \right|, \\
 & = & | [d(z_n,x) - d(z_n,y)] - [d(x_0,x) - d(x_0,y)] |, \\
 & \leq & |d(z_n,x) - d(z_n,y)| + |d(x_0,x) - d(x_0,y)| \\
 (\text{By the triangle inequality}) & \leq & d(x,y) + d(x,y) \\
 & =  & 2\cdot d(x,y).
\end{eqnarray*}
By the fact that $\sum_{n=1}^\infty 2^{-n} = 1$, we obtain $d_*(\mathbf{x}, \mathbf{y}) \leq 2 \cdot d(x,y)$.
\end{proof}

\begin{corollary}
   The induced map $h_* : \mathbb{C}(\mathbb{S}) \rightarrow \mathbb{C}(\overline{\mathbb{S}}^h)$ given by $h_*(A) = \overline{h(A)}^{\overline{\mathbb{S}}^h}$ is robust. 
\end{corollary}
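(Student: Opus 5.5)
The plan is to obtain the corollary directly from Lemma~\ref{lemma:right_adj_robust}, with $\iota := h$ and $\hat{\mathbb{S}} := \overline{\mathbb{S}}^h$ carrying the metric $d_*$ from~\eqref{eq:d_*}, realised inside $\mathbb{R}^Z$ for a countable dense $Z \subseteq S$. We assume, as in the surrounding discussion, that $\mathbb{S}$ is separable and Gromov-compactifiable. The work consists in checking that the hypotheses of that lemma hold in this setting.

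First, I will check that $(\overline{\mathbb{S}}^h, d_*)$ is a compact metric space compactifying $\mathbb{S}$ with embedding $h$. By Proposition~\ref{prop:horo_dense_separable}, the pointwise closure of $h(S)$ in $\mathbb{R}^Z$ is homeomorphic to the full horofunction compactification, which is compact. Since $d_*$ induces the product topology on $\mathbb{R}^Z$, this closure, with $d_*$ restricted to it, is a compact metric space. Gromov-compactifiability means $h$ is a topological embedding, and $h(S)$ is dense in its closure by construction. Together these give exactly the setting of Section~\ref{sec:Adjunctions}, including the adjunction $h^* \dashv h_*$.

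Second, Proposition~\ref{prop_embedding_lipschitz} shows that $h : (S,d) \to (\mathbb{R}^Z, d_*)$ is $2$-Lipschitz. Hence it is Lipschitz as a map into the subspace $\overline{\mathbb{S}}^h$ with the restricted metric. Lemma~\ref{lemma:right_adj_robust} then gives that $h_*(A) = \overline{h(A)}^{\overline{\mathbb{S}}^h}$ is robust. If robustness is read in the topological sense, Theorem~\ref{thm:right_adj_robust} upgrades this to continuity for the robust topologies.

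The main obstacle is a mismatch of spaces rather than any estimate. The Lipschitz bound is proved for $h$ restricted to the coordinates in $Z$, whereas $\overline{\mathbb{S}}^h$ is defined as a closure in $\mathbb{R}^S$. To bridge this I will use Proposition~\ref{prop:horo_dense_separable}(\ref{item:dense_subset}): restriction to $Z$ is injective on $1$-Lipschitz functions, because $Z$ is dense and the functions are continuous. The restriction map is therefore a continuous bijection from a compact space onto a Hausdorff one, hence a homeomorphism. Consequently, closed sets and closures computed in either model correspond under it, and $h_*$ is the same map whichever model is used.
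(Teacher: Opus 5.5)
Your proposal is correct and is essentially the paper's argument: the paper obtains the corollary directly from the $2$-Lipschitz bound of Proposition~\ref{prop_embedding_lipschitz} combined with Theorem~\ref{thm:right_adj_robust}, which rests on Lemma~\ref{lemma:right_adj_robust}, just as you do. Your extra check that closures in $\mathbb{R}^S$ and $\mathbb{R}^Z$ correspond under the restriction homeomorphism makes explicit a step the paper leaves implicit in its appeal to Proposition~\ref{prop:horo_dense_separable}.
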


\begin{proof}
    The proof follows from Proposition~\ref{prop_embedding_lipschitz} and Theorem~\ref{thm:right_adj_robust}.
\end{proof}

\section{The Case of $\mathbb{S} = \ell_p$}
\label{sec:case_ell_p}

In this section, we focus on the case of $\mathbb{S} = \ell_p$ with $p \in (1, \infty)$ and present an explicit description of a countable (domain-theoretic) basis for the $\omega$-continuous lattice $\mathbb{C}(\overline{\ell_p}^h)$.

Assume that $Z \subseteq \ell_p$ is a countable dense subset of
$\ell_p$. We know that $\overline{\ell_p}^h$ is a compact metric space
that embeds into the product space $\mathbb{R}^Z$ with the product
topology. Following Definition~\ref{def_pointwise}, the collection of
all sets of the form
$S(x,U) = \{f \in \mathrm{Lip}^1_{x_0}(\ell_p) \mid f(x) \in U\}$ is a
subbasis for the (relative) product topology on
$\mathrm{Lip}^1_{x_0}(\ell_p)$, in which $x$ ranges over elements of
$Z$, and $U$ ranges over open subsets of $\mathbb{R}$. Since
$\mathbb{R}$ is second-countable, we consider the collection:
\begin{equation*}
    \begin{cases}
   \mathcal{S} := \{U(d,q,\epsilon) \mid d \in Z, \ q \in \mathbb{Q}, \  \epsilon \in \mathbb{Q}^+  \},\\
   U(d,q,\epsilon) \coloneqq \{h \in \overline{\ell_p}^h  \mid | h(d) - q| < \epsilon \}.     
    \end{cases}
\end{equation*}
As such, the set $\mathcal{S}$ is countable. The set of all finite intersections of elements of $\mathcal{S}$ forms the countable basis
\begin{equation*}
	\mathcal{B} = \left \{ \bigcap U \mid U \subseteq_f \mathcal{S} \right \}    
\end{equation*}
for $\overline{\ell_p}^h$, which we enumerate as $\mathcal{B} = \{ V_i \mid i \in \Nat\}$. Recall that $U \subseteq_f \mathcal{S}$ denotes $U$ is a finite subset of $\mathcal{S}$. Finally, we obtain the following theorem:
\begin{theorem}
    The lattice $\mathbb{C}(\overline{\ell_p}^h)$, ordered by reverse inclusion, is an $\omega$-continuous lattice with a countable basis
    \begin{equation*}
      \mathcal{K} \coloneqq \left\{ \bigcup_{i \in F} \overline{V}_i \mid F \subseteq_f \Nat \right\}.  
    \end{equation*}
\end{theorem}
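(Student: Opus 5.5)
The plan is to use the standard characterisation of the way-below relation on the lattice of closed subsets of a compact Hausdorff space, ordered by reverse inclusion. Write $X := \overline{\ell_p}^h$. Since $\ell_p$ for $p \in (1,\infty)$ is a separable reflexive Banach space, it is Gromov-compactifiable. By Proposition~\ref{prop:horo_dense_separable}, $X$ is a compact metrizable space, and in particular compact Hausdorff.

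First I would establish the key claim: for closed $C, D \subseteq X$,
\begin{equation*}
C \ll D \iff D \subseteq \mathrm{int}(C).
\end{equation*}
\begin{itemize}
\item For ($\Leftarrow$), let $\{A_j\}$ be a directed family (with respect to $\supseteq$) with $\bigcap_j A_j \subseteq D \subseteq \mathrm{int}(C)$. Then $\{A_j \setminus \mathrm{int}(C)\}$ is a filtered family of closed subsets of the compact space $X$ with empty intersection. By the finite intersection property some $A_j \setminus \mathrm{int}(C)$ is empty, so $A_j \subseteq C$, which means $C \sqsubseteq A_j$.
\item For ($\Rightarrow$), use normality or regularity of $X$. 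The closed neighbourhoods $\overline{W}$ of $D$ form a directed family whose intersection is exactly $D$. Hence some such $\overline{W}$ satisfies $\overline{W} \subseteq C$, so $D \subseteq W \subseteq \mathrm{int}(C)$.
\end{itemize}

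Next, I would show that for every closed $D$, the set $\mathcal{K}_D := \{K \in \mathcal{K} \mid K \ll D\}$ is directed with intersection $D$.
\begin{itemize}
\item Directedness: if $K_1, K_2 \ll D$ then $K_1 \cap K_2$ has interior containing $D$. Since $D$ is compact and $\mathcal{B}$ is a basis, I can cover $D$ by finitely many $V_i$ with $\overline{V}_i \subseteq \mathrm{int}(K_1 \cap K_2)$ (using regularity to shrink), and the union of these lies in $\mathcal{K}_D$ below both. Note that $\emptyset \in \mathcal{K}$ via $F = \emptyset$, which handles $D = \emptyset$ and gives nonemptiness.
\item Intersection: for $x \notin D$, pick by regularity an open $W \supseteq D$ with $x \notin \overline{W}$. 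By compactness, cover $D$ by finitely many basic $V_i$ with $\overline{V}_i \subseteq W$. Their union $K$ is in $\mathcal{K}$, satisfies $D \subseteq \mathrm{int}(K)$, and excludes $x$. So $\bigcap \mathcal{K}_D = D$, which is the join $\bigvee \mathcal{K}_D$ in the order $\supseteq$.
\end{itemize}

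Since $\mathcal{K}$ is countable (there are countably many finite subsets of $\Nat$), and $\mathbb{C}(X)$ is a complete lattice (with arbitrary intersections as joins), this yields an $\omega$-continuous lattice with basis $\mathcal{K}$.

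The main subtle step is the shrinking argument: ensuring that the basic sets $V_i$ can be chosen with $\overline{V}_i$ inside a given open set. This needs regularity of $X$, which holds by metrizability, together with the fact that $\mathcal{B}$ is a basis of the topology of $X$. The latter relies on $Z$ being dense, so that pointwise topology indexed by $Z$ agrees with that indexed by $\ell_p$ on $X$; this is the content of Proposition~\ref{prop:horo_dense_separable} together with equicontinuity of $1$-Lipschitz functions.
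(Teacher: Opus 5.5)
Your proof is correct, but it takes a more elementary route than the paper. The paper's argument is two lines. Since $\overline{\ell_p}^h$ is compact, Hausdorff and second countable, its closed subsets are exactly its compact subsets, so $\mathbb{C}(\overline{\ell_p}^h)$ is the upper space. Edalat's Proposition~3.4(i) \cite{Edalat95:DT-fractals} then gives $\omega$-continuity with a basis of finite unions of closures of relatively compact basic open sets; in a compact space every $V_i$ is relatively compact, which is where $\mathcal{K}$ comes from.

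You instead prove the characterisation $C \ll D \iff D \subseteq \mathrm{int}(C)$ from compactness and regularity, and then check by hand that $\mathcal{K} \cap \twoheaddownarrow D$ is directed with intersection $D$. The citation is shorter and places the result in standard domain theory. Your version is self-contained and shows that only compactness, regularity and a countable base are used. In particular, your appeal to reflexivity and Gromov-compactifiability is unnecessary: $\overline{\mathbb{S}}^h$ is compact for every $\mathbb{S}$ and metrizable for separable $\mathbb{S}$. Your version also treats the top element $\emptyset$ explicitly, which the appeal to the upper space (normally the non-empty compact sets) leaves implicit.

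You rightly note that $\mathcal{B}$ being a base for $X$ depends on the density of $Z$. A quick self-contained argument: restriction $X \to \mathbb{R}^Z$ is continuous and injective, because $1$-Lipschitz functions agreeing on the dense set $Z$ coincide. A continuous injection from a compact space into a Hausdorff space is an embedding.

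One slip to fix: in $(\mathbb{C}(X), \supseteq)$ the set $\emptyset$ is the top, not the bottom. Moreover $\emptyset \ll D$ holds only when $D = \emptyset$, so $\emptyset$ does not give nonemptiness of $\mathcal{K} \cap \twoheaddownarrow D$ for $D \neq \emptyset$. Use your own covering argument instead. Finitely many $V_i$ cover the compact set $D$, and $K := \bigcup_i \overline{V}_i$ satisfies $D \subseteq \bigcup_i V_i \subseteq \mathrm{int}(K)$, so $K \ll D$. Equivalently, $X \in \mathcal{K}$ and $X \ll D$ for every $D$.

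Incidentally, no shrinking is needed in the directedness step. Since $K_1 \cap K_2$ is closed, $V_i \subseteq \mathrm{int}(K_1 \cap K_2)$ already gives $\overline{V}_i \subseteq K_1 \cap K_2$. Regularity is genuinely needed only in the intersection step and in the direction ($\Rightarrow$) of your characterisation.
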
 

\begin{proof}
Note that $\overline{\ell_p}^h$ is a second-countable compact 
Hausdorff space. As a result, a subset of $\overline{\ell_p}^h$ is closed if and only if it is compact, which entails that $\mathbb{C}(\overline{\ell_p}^h)$ is the so-called upper space of $\overline{\ell_p}^h$. Therefore, by~\cite[Proposition~3.4(i)]{Edalat95:DT-fractals}, $\mathbb{C}(\overline{\ell_p}^h)$ is an $\omega$-continuous lattice with a basis consisting of finite unions of closures of relatively compact open subsets of $\overline{\ell_p}^h$.
\end{proof}

The horofunction compactification of a given $\mathbb{S}$ can be significantly larger than $\mathbb{S}$. For example, if we take $S_{\ell_2} = \{x \in \ell_2 \mid \|x\|_2 = 1\}$ to be the unit sphere of the (infinite-dimensional) Hilbert space $\ell_2$, then its horofunction compactification $\overline{S_{\ell_2}}^h$ is homeomorphic to the closed unit ball $B_{\ell_2} = \{x \in \ell_2 \mid \|x\|_2 \leq 1 \}$ in the weak topology~\cite[Example 1.7]{daniilidis2025horofunction}. 

Nonetheless, we present an example showing that the horofunction construction can retain a higher degree of precision than the construction of~\cite{farjudian2023robustness}.
We focus on the case of $\ell_p$ spaces. For a detailed account of the metric compactification of $\ell_p$ spaces, see~\cite{gutierrez2019metric}.

\begin{example}
\label{example:ell_p_precision}
Assume that $p \in (1, \infty)$ and let $E := \{e_n \in \ell_p \mid n \in \Nat \}$, in which:
\begin{equation*}
    \forall m, n \in \Nat: \quad e_n(m) = \begin{cases}
        0 \quad n \neq m, \\
        1 \quad n = m.
    \end{cases}    
\end{equation*}
The set $E$ is a non-convex closed subset of $\ell_p$. For the basepoint $x_0 = \mathbf{0}$, we have
\begin{equation*}
\forall x \in \ell_p: \quad
    h_{{e}_n}(x) = ||x - e_n|| - \|e_n\| = \left(|x_n-1|^p + \sum_{k \neq n} |x_k|^p \right)^{1/p} -1.
\end{equation*}
Since $x \in \ell_p$, we have $\lim_{n \to \infty} x_n = 0$ and $\lim_{n \to \infty} \sum_{k\neq n}|x_k|^p = \|x\|^p$. Hence:
\begin{equation*}
  \lim_{n \to \infty} h_{e_n}(x) = \left( 1 + \|x\|^p \right)^{1/p} -1.  
\end{equation*}

Let us define $\varphi(x) \coloneqq \left( 1 + \|x\|^p \right)^{1/p} -1$. We know that $\varphi \in \overline{h(E)} \in \mathbb{C}(\overline{\ell_p}^h)$. We now show that $\forall z \in \ell_p: \varphi \neq h_z$. To obtain a contradiction, assume that $\varphi = h_z$, for some $z \in \ell_p$:

\begin{description}
    \item[Case 1, $z \neq \mathbf{0}$:] Since $x_0 = \mathbf{0}$, we have
    \begin{equation}
    \label{label:ex_eq}
     \forall x \in \ell_p: \quad
        (1 + \|x\|^p)^{1/p} - 1= \|x-z\| - \| z\|.
    \end{equation}
If we let $x = z$, then we must have:
\begin{equation}
\label{eq:1_norm_z_p}
 (1+ \|z\|^p)^{1/p} -1 = -\| z\| \implies 1 - \|z\| = (1+ \|z\|^p)^{1/p}.   
\end{equation}
But, when $z \neq \mathbf{0}$, we have $1 - \|z\| < 1 < (1+ \|z\|^p)^{1/p}$, which contradicts~\eqref{eq:1_norm_z_p}.

\item[Case 2, $z = \mathbf{0}$:] 
By inserting $z=0$ into equation (\ref{label:ex_eq}), we obtain $\forall x \in \ell_p: (1+\|x\|^p)^{1/p} = 1 + \|x\|$, which contradicts the fact that:
\begin{equation*}
\forall x \in \ell_p \setminus \{ \mathbf{0} \}: \quad (1+\|x\|^p)^{1/p} <  1 + \|x\|,
\end{equation*}
\end{description}
In particular, $\varphi \neq h_{e_n}$, for any $n \in \Nat$, and we have gained a new point. Hence, $E \cup \{\varphi\} \subseteq \overline{E}^h$, and in fact, it is straightforward to show that $E \cup \{ \varphi \} = \overline{E}^h$. Therefore, we have $\iota_*(E) = E \cup \{\varphi\}$. Since $\forall z \in \ell_p: \varphi \neq z$, we have $\iota^* \circ \iota_* (E) = E$, which shows that, over $E$, there is no loss of precision. \emph{This is in contrast with the framework of~\cite{farjudian2023robustness}}, which leads to loss of precision over $E$~\cite[Example~5.20]{farjudian2023robustness}.
\end{example}

\section{Concluding Remarks}
\label{sec:concluding_remarks}

We have proposed Gromov's horofunction compactification as a viable construction for robustness analysis over non-locally-compact metric spaces. These spaces include all infinite-dimensional Banach spaces (e.g., $\ell_p$ and $L_p(X)$ spaces). Since the embedding associated with horofunction compactification can be Lipschitz (Proposition~\ref{prop_embedding_lipschitz}), we have the foundation for computable robust approximations (Corollary~\ref{cor:compos_approx}). We presented some further analysis for the case of $\mathbb{S} = \ell_p$ spaces (Section~\ref{sec:case_ell_p}) and provided an explicit description of a countable basis for the lattice $\mathbb{C}(\overline{\ell_p}^h)$. Finally, via Example~\ref{example:ell_p_precision}, we demonstrated that the horofunction approach retains more precision compared to the construction of~\cite{farjudian2023robustness}. 

As such, our results demonstrate that the horofunction compactification offers a principled, computable, and precision-preserving foundation for robustness analysis over non-locally-compact metric spaces, including infinite-dimensional Banach spaces.

Although we have presented an explicit countable basis for $\mathbb{C}(\overline{\ell_p}^h)$, we have not investigated when, in general, the lattice $\mathbb{C}(\overline{\mathbb{S}}^h)$ can be given an effective structure~\cite{Smyth:effectively_given_domains:1977}. Furthermore, we have not studied the case of $L_p(X)$ spaces, which are ubiquitous in functional analysis and partial differential equations.

\bibliographystyle{entics}
\bibliography{Biblio}

@incollection{AbramskyJung94-DT,
  author =	 {S. Abramsky and A. Jung},
  booktitle =	 {Handbook of Logic in Computer Science},
  title =	 {Domain Theory},
  publisher =	 {Clarendon Press, Oxford},
  pages =	 {1--168},
  year =	 {1994},
  editor =	 {S. Abramsky and D. M. Gabbay and T. S. E. Maibaum},
  volume =	 {3}
}

@InProceedings{Dagnino_Farjudian_Moggi:Robustness_Quantales_Hausdorff_Smyth:ICTAC:2023,
  author = 	 {Francesco Dagnino and Amin Farjudian and Eugenio Moggi},
  title = 	 {Robustness in Metric Spaces over Continuous Quantales and the {Hausdorff-Smyth} Monad},
  editor="{\'A}brah{\'a}m, Erika and Dubslaff, Clemens and Tarifa, Silvia Lizeth Tapia",
  booktitle="Theoretical Aspects of Computing -- ICTAC 2023",
  series = 	 {Lecture Notes in Computer Science},
  volume = 	 {14446},
  year="2023",
  publisher="Springer Nature Switzerland",
  address="Cham",
  pages="313--331",
  doi = {10.1007/978-3-031-47963-2_19},
  isbn="978-3-031-47963-2"
}

@misc{Dagnino_Farjudian_Moggi:Robust_Topology:arXiv:2025,
  author = 	 {Francesco Dagnino and Amin Farjudian and Eugenio Moggi},
  title = 	 {Robust Topology and the {Hausdorff-Smyth} Monad on
                  Metric Spaces over Continuous Quantales},
  year={2025},
  eprint={2508.11623},
  archivePrefix={arXiv},
  primaryClass={cs.LO},
  OPTurl = {https://arxiv.org/abs/2508.11623},
  doi = {10.48550/arXiv.2508.11623}
}

@article{Edalat_Farjudian_Li:Temporal_Discretization:2023,
  author = {Abbas Edalat and Amin Farjudian and Yiran Li},
  title = {Recursive Solution of Initial Value Problems with Temporal Discretization},
  journal = {Theoretical Computer Science},
  pages = {114221},
  year = {2023},
  OPTissn = {0304-3975},
  doi = {10.1016/j.tcs.2023.114221},
  OPTurl = {https://www.sciencedirect.com/science/article/pii/S0304397523005340}
}

@article{moggi2018safe,
  author = 	 {Eugenio Moggi and Amin Farjudian and Adam Duracz and
                  Walid Taha},
  title = 	 {Safe \& Robust Reachability Analysis of Hybrid Systems},
  journal = 	 {Theoretical Computer Science},
  OPTjournal = 	 {Theor. Comput. Sci.},
  volume = "747",
  pages = "75--99",
  year = 	 {2018},
  doi =          {10.1016/j.tcs.2018.06.020}
}

@Article{Edalat95:DT-fractals,
  author =	 {A. Edalat},
  title =	 {Dynamical Systems, Measures and Fractals via Domain
                  Theory},
  journal = {Information and Computation},
  year =	 1995,
  volume =	 120,
  number =	 1,
  pages =	 {32--48},
  OPTmonth =	 {July},
  doi = {10.1006/inco.1995.1096}
}

@article{farjudian2023robustness,
  author = 	 {Amin Farjudian and Eugenio Moggi},
  title={Robustness, {Scott} continuity, and Computability},
  journal={Mathematical Structures in Computer Science},
  volume={33},
  number={6},
  year={2023},
  pages={536--572},
  DOI={10.1017/S0960129523000233}  
}

@article{Farjudian_Jung:Spectral:ENTICS:2024,
    title      = {Continuous Domains for Function Spaces Using Spectral Compactification},
    author     = {Amin Farjudian and Achim Jung},
    OPTurl        = {https://entics.episciences.org/14736},
    doi        = {10.46298/entics.14736},
    journal    = {Electronic Notes in Theoretical Informatics and Computer Science},
    OPTissn       = {2969-2431},
    volume     = {Volume 4 - Proceedings of MFPS XL},
    eid        = 8,
    year       = {2024},
    month      = {Dec},
    OPTkeywords   = {Computer Science - Logic in Computer Science, Mathematics - General Topology, 06B35},
}

@inproceedings{Gromov:Horofunctions:1981,
  title={Hyperbolic manifolds, groups and actions},
  author={Gromov, Mikhael},
  booktitle={Riemann surfaces and related topics: Proceedings of the 1978 Stony Brook Conference (State Univ. New York, Stony Brook, NY, 1978)},
  volume={97},
  pages={183--213},
  year={1981}
}

@article{daniilidis2025horofunction,
  title={Horofunction extension and metric compactifications},
  author={Daniilidis, Aris and Garrido, Maria and Jaramillo, J and Tapia-Garc{\'\i}a, Sebastian},
  journal={Transactions of the American Mathematical Society, Series B},
  volume={12},
  number={29},
  pages={1130--1155},
  year={2025},
  doi ={10.1090/btran/234}
}

@inproceedings{gaubert2012maximin,
  title={A maximin characterisation of the escape rate of non-expansive mappings in metrically convex spaces},
  author={Gaubert, St{\'e}phane and Vigeral, Guillaume},
  booktitle={Mathematical Proceedings of the Cambridge Philosophical Society},
  volume={152},
  number={2},
  pages={341--363},
  year={2012},
  organization={Cambridge University Press},
  doi = {10.1017/S0305004111000673}
}

@Book{Goubault-Larrecq:Non_Hausdorff_topology:2013,
  author = {Goubault-Larrecq, Jean},
  title = {Non-Hausdorff topology and domain theory},
  OPTseries = {New mathematical monographs: 22},
  publisher = {Cambridge University Press},  
  OPTisbn = {1107034132},
  year = {2013},
  OPTlanguage = {eng},
  OPTaddress = {Cambridge},  
  OPTkeywords = {Topology -- Problems exercises etc; Topology},
  OPTlccn = {2013427187}
}

@article{gutierrez2019metric,
  title={On the metric compactification of infinite-dimensional $\ell_p$ spaces},
  author={Guti{\'e}rrez, Armando W},
  journal={Canadian Mathematical Bulletin},
  volume={62},
  number={3},
  pages={491--507},
  year={2019},
  publisher={Canadian Mathematical Society},
  doi={10.4153/S0008439518000681}
}

@book{munkres2017topology,
  title={Topology},
  author={Munkres, J.},
  isbn={9780134689517},
  lccn={2016055057},
  series={Pearson Modern Classics for Advanced Mathematics Series},
  url={https://books.google.co.uk/books?id=51n8MAAACAAJ},
  year={2017},
  publisher={Pearson}
}

@incollection{moggi2019system,
  author="Moggi, Eugenio and Farjudian, Amin and Taha, Walid",
  editor="Margaria, Tiziana and Graf, Susanne and Larsen, Kim G.",
  title="System Analysis and Robustness",
  bookTitle="Models, Mindsets, Meta: The What, the How, and the Why Not? Essays Dedicated to Bernhard Steffen on the Occasion of His 60th Birthday",
  year="2019",
  publisher="Springer International Publishing",
  pages="36--44",
  isbn="978-3-030-22348-9",
  doi="10.1007/978-3-030-22348-9_4"
}

@InProceedings{Moggi_Farjudian_Taha:System_Analysis_and_Robustness:ICTCS:2019,
  author    = {Eugenio Moggi and
               Amin Farjudian and
               Walid Taha},
  title     = {System Analysis and Robustness},
  booktitle = {Proceedings of the 20th Italian Conference on Theoretical Computer
               Science, {ICTCS} 2019, Como, Italy, September 9-11, 2019},
  pages     = {1--7},
  year      = {2019},
  url       = {http://ceur-ws.org/Vol-2504/paper1.pdf},
  bibsource = {dblp computer science bibliography, https://dblp.org}
}

@article{Smyth:effectively_given_domains:1977,
  author =	 {Smyth, Michael B.},
  title =	 {Effectively Given Domains},
  journal =	 {Theoretical Computer Science},
  volume =	 {5},
  year =	 {1977},
  pages =	 {257--274},
  doi = {10.1016/0304-3975(77)90045-7}
}

@article{Zhou_Shaikh_Li_Farjudian:Robust_NN:MSCS:2023,
  title={A domain-theoretic framework for robustness analysis of neural networks},
  author={Zhou, Can and Shaikh, Razin A. and Li, Yiran and Farjudian, Amin},
  DOI={10.1017/S0960129523000142},
  journal={Mathematical Structures in Computer Science},
  volume={33},  
  number={2},
  publisher={Cambridge University Press},
  year={2023},
  pages={68--105}
}

\end{document}